\documentclass[11pt]{amsart}

\usepackage{amsmath,amssymb,amsthm}
\usepackage{mathtools}
\usepackage{mathrsfs}

\usepackage[T1]{fontenc}
\usepackage{lmodern}

\usepackage{xcolor}
\usepackage[
    colorlinks=true,
    citecolor=red,
    linkcolor=blue,
    urlcolor=blue
]{hyperref}

\newtheorem{theorem}{Theorem}[section]
\newtheorem{lemma}[theorem]{Lemma}
\newtheorem{proposition}[theorem]{Proposition}
\newtheorem{corollary}[theorem]{Corollary}

\theoremstyle{definition}
\newtheorem{definition}[theorem]{Definition}

\theoremstyle{remark}

\makeatletter
\renewcommand{\paragraph}{\@startsection{paragraph}{4}%
  \z@\z@{-\fontdimen2\font}%
  {\normalfont\bfseries}}
\makeatother

\title[Asymptotic Neutrality in Two--Center TFDW Theory]{Asymptotic Neutrality for the Ionization Conjecture in the Two--Center Thomas--Fermi--Dirac--von Weizs\"acker Theory}

\author{Lu Chen}
\address[Lu Chen]{Key Laboratory of Algebraic Lie Theory and Analysis of Ministry of Education, School of Mathematics and Statistics, Beijing Institute of Technology, Beijing, 100081, P.R. China}
\email{chenlu5818804@163.com}

\author{Yong Wu$^*$}
\address[Yong Wu]{Key Laboratory of Algebraic Lie Theory and Analysis of Ministry of Education, School of Mathematics and Statistics, Beijing Institute
of Technology, Beijing 100081, PR China}
\email{wu\_yong2025@163.com}

\subjclass[2020]{Primary 35Q40; Secondary 81V45, 35J10}

\keywords{ionization conjecture, Thomas--Fermi--Dirac--von Weizs\"acker theory,
asymptotic neutrality, excess charge, two-center systems.}

\thanks{$*$ Corresponding author.}

\thanks{The first author was partly supported by the National Natural Science
Foundation of China (No.~12622108) and a grant from Beijing Institute of
Technology (No.~2022CX01002).}

\date{}

\begin{document}

\begin{abstract}
The ionization conjecture for the Thomas--Fermi--Dirac--von Weizs\"acker model
asks how many electrons a given nuclear configuration can bind. So far, only
the single-center case has been settled by Frank, Nam, and Van Den
Bosch in~\cite{FrankNamVanDenBosch2018}, while the original molecular case has
still remained open. We advance this conjecture in the two-center case. We
prove that for two distinct nuclei of total charge
\(Z\), the number \(N\) of bound electrons satisfies
\(N\le Z+C Z^{153/154}(1+\log Z)^{3/8}\), with constants independent of the
nuclear positions and the nuclear charges. The proof relies on a
comparison with the neutral Thomas--Fermi density and on localization and
exterior \(L^1\)-estimates coupled through the screened potential. Our results
could be extended to molecules with any fixed number \(K\) of nuclei.
\end{abstract}

\maketitle

\tableofcontents

\section{Introduction}

A fundamental question in quantum many-body theory is how many electrons a
system
of nuclei can bind. For a nucleus of charge \(Z\), the ionization problem asks
for the
maximum negative ionization, that is, the largest number of electrons that can
be
bound when this number is allowed to exceed \(Z\).

\paragraph{Ionization conjecture.} It asserts that a nucleus of charge \(Z\)
can bind at most \(Z+C\) electrons and that a molecule of \(K\) nuclei can bind at
most \(Z+CK\), where the constant \(C\) is universal in both cases.

In nonrelativistic many-body Schr\"odinger theory, Lieb~\cite{Lieb1984} proved
that
the maximum number \(N_c\) of negative particles that can be bound to a nucleus
of
charge \(Z\) satisfies
\[
    N_c<2Z+1,
\]
and that for a molecule of \(K\) fixed nuclei with total nuclear charge \(Z\),
\[
    N_c<2Z+K .
\]
The bound depends neither on the number of bound particles nor on their
statistics. In particular, it applies to bosons as well as fermions. For
fermionic
atoms and every \(Z>0\), Nam~\cite{Nam2012} subsequently improved Lieb's upper
bound to
\[
    N_c<1.22Z+3Z^{1/3},
\]
which is smaller than \(2Z+1\) when \(Z\ge6\).

A complementary line of research has brought the coefficient of the leading term
\(Z\) in the ionization upper bound down to the neutral value \(1\). Writing
\(N_c(Z)\) for the maximum number of electrons that can be bound to a nucleus of
charge \(Z\), Lieb, Sigal, Simon, and Thirring~\cite{LiebSigalSimonThirring1988}
first proved the asymptotic neutrality of fermionic atoms,
\[
    \lim_{Z\to\infty} N_c(Z)/Z=1 .
\]
Fefferman and Seco~\cite{FeffermanSeco1990} quantified the rate of convergence
and
proved
\[
    N_c(Z)=Z+O(Z^{47/56}),
\]
and Seco, Sigal, and Solovej~\cite{SecoSigalSolovej1990} obtained the same bound
on
the excess charge \(Q=N_c(Z)-Z\) from a simple bound on the ionization energy.
A central ingredient in these proofs is a comparison with Thomas--Fermi theory
together with estimates that control the excess charge outside a ball of radius
\(R\)~\cite{FeffermanSeco1990,SecoSigalSolovej1990}. For a modern formulation of
such exterior estimates in TFDW theory we refer
to~\cite{FrankNamVanDenBosch2018}.

Thomas--Fermi (TF) theory provides the variational framework for this
comparison.
It is the oldest density functional theory: it replaces the many-body wave
function
by the electron density and captures the leading-order behavior of the many-body
ground state energy, so that the many-electron problem becomes a minimization
problem for a density functional.
Lieb and Simon~\cite{LiebSimon1977} established the rigorous mathematical
foundations
of TF theory for atoms and molecules, including the existence and uniqueness of
the
neutral minimizer and its characterization by the TF equation.
For this minimizer the total number of electrons equals the total nuclear charge
\(Z\),
so it is the natural reference density for systems with large nuclear charge.

Quantum corrections to TF theory come from the von Weizs\"acker gradient term
\[
    c_{\mathrm W}\int_{\mathbb R^3}|\nabla\sqrt{\rho_0}|^2 \,dx
\]
and the Dirac exchange term
\[
    -c_{\mathrm D}\int_{\mathbb R^3}\rho^{4/3}\,dx,
\]
leading to the TFDW functional.

In Thomas--Fermi--von Weizs\"acker (TFW) theory, only the gradient correction is
included. Negative ions exist in this theory and the excess charge admits a
uniform
bound: Benguria and Lieb~\cite{BenguriaLieb1985} proved that for a system of
\(K\)
nuclei with total nuclear charge \(Z\), the maximum number \(N_c\) of electrons
that
can be bound satisfies
\[
    N_c\leq Z+CK,
\]
where \(C\) is a universal constant and is independent of the nuclear charges
and
positions. This result holds for both atoms and molecules. It shows that for a
fixed
number of nuclei the excess charge \(N_c-Z\) remains bounded even as the total
nuclear charge
tends to infinity.

For the TFDW model the mechanism is different. The exchange term prevents a
detached
part of the electron density from spreading out, while the free part of the
energy is
translation invariant. Consequently excess electrons can escape to infinity at
no
energy cost. This makes the existence of a minimizer in the regime \(N\leq Z\)
and the
nonexistence of minimizers for large \(N\) two distinct phenomena, and the
latter has
to be excluded by an additional hypothesis.

Lions~\cite{Lions1987} proved, in particular, that the TFDW minimization problem
has a
minimizer whenever \(N\leq Z\), and this holds for a system of \(K\) nuclei of
total
nuclear charge \(Z\).
Lu and Otto~\cite{LuOtto2014} proved that, without an external potential, no
minimizer
exists when the electron number is sufficiently large.
Nam and Van Den Bosch~\cite{NamVanDenBosch2017} proved that, for a small total
nuclear
charge, no minimizer exists when the electron number is sufficiently large.
Frank, Nam, and Van Den Bosch~\cite{FrankNamVanDenBosch2018} proved that in
atomic TFDW
theory a nucleus of charge \(Z>0\) can bind at most \(Z+C\) electrons, where
\(C\) is a
universal constant.
The atomic case is therefore settled, and we may assume that a minimizer exists.
The
difficulty in the two-center problem is that the comparison density and the
exterior
estimate must be uniform in the nuclear positions and in the nuclear charges.

The ionization conjecture concerns molecular systems with several nuclei in
the first place. The atomic case settled by Frank, Nam, and Van Den Bosch
in~\cite{FrankNamVanDenBosch2018} is a special case of it, and the molecular
case has remained largely open. The
known bounds depend on the nuclear positions and charges, whereas the atomic
bound is uniform in \(Z\), and no asymptotic neutrality bound uniform in the
molecular geometry was available. We make progress on the two-center case
by proving such a bound.

The first difficulty is to obtain uniform estimates with respect to the
molecular
geometry. The nuclear positions may approach each other or become widely
separated, and
the charge ratio \(Z_1/Z_2\) may be arbitrary. Therefore arguments relying on a
fixed
separation scale or comparable nuclear charges cannot be applied directly, and
the
neutral TF density is not given in closed form, in contrast to the atomic case.
The
localization estimate below is designed to be uniform with respect to the
two-center
geometry.

The second difficulty concerns the comparison with the neutral Thomas--Fermi
density.
In TF theory the energy deficit controls the Coulomb energy of the difference
\(\rho_0-\rho^{\mathrm{TF}}\) directly, while in TFDW the exchange term
contributes a
positive local energy density, so that the deficit alone yields no such control
and the
comparison has to be organised as a bootstrap argument.

Another difficulty is the possible escape of electrons to infinity. When the
electron
number exceeds the total nuclear charge, excess electrons may escape from the
nuclear
region and contribute to the electron number in the exterior region. The
exterior estimate must
therefore control the number of electrons outside a ball through the screened
potential
\(\Phi_r=V-U_r\), which is itself the quantity the second difficulty is about,
so
that the two have to be treated simultaneously.

These difficulties motivate the development of new estimates that combine
Thomas--Fermi comparison with localization arguments adapted to the two-center
setting.

The main contribution of this paper is to establish a quantitative asymptotic
neutrality estimate for the two-center TFDW theory. The constants \(C\) and
\(Z_0\) in
Theorem~\ref{thm:main} depend only on the coefficients of the functional and are
independent of the distinct nuclear positions, of the nuclear charges, of \(N\),
and of
the minimizer. The proof rests on three ingredients: a quantitative comparison
with the
neutral Thomas--Fermi density, which bounds the Coulomb energy
\(\mathcal{D}(\rho_0-\rho^{\mathrm{TF}},\rho_0-\rho^{\mathrm{TF}})\) of the
difference between the
minimizer and the neutral TF minimizer; an interior estimate for the Coulomb
potential that is uniform with respect to the two-center geometry; and an
exterior
\(L^1\)-estimate for the number of electrons in the exterior region. The last
two are
coupled through the screened potential \(\Phi_r\), whose decay is a consequence
of the comparison with TF theory.

Although we only treat two nuclei, the method could
be extended to any fixed number \(K\) of nuclei and yield
\(N\le Z+C(K)Z^{153/154}(1+\log Z)^{3/8}\),
where \(C(K)\) depends on \(K\) as well.

The paper is organized as follows. In Section~\ref{sec:main-results}, we state
the main results and collect the notation. In Section~\ref{sec:rough-bound}, we
prove a rough bound on the number of electrons. In
Section~\ref{sec:tf-comparison}, we compare the minimizer with the neutral
Thomas--Fermi density. In Section~\ref{sec:coulomb-truncation}, we establish the
interior estimate for the Coulomb potential. In
Section~\ref{sec:exterior-estimates}, we prove the exterior \(L^1\)-estimates for
the electron number. Finally, in Section~\ref{sec:main-proof}, we complete the
proof of the main theorem.

\section{Main Results and Notation}
\label{sec:main-results}

In this section we state the main results and introduce the notation used in the
sequel.

\subsection{The two-center TFDW problem and the main result}

Throughout the paper, we fix coefficients \(c_{\mathrm{TF}}>0\), \(c_{\mathrm W}>0\), and
\(c_{\mathrm D}\ge0\). Unless a dependence is explicitly indicated, constants
\(C,c>0\) depend only on these coefficients and may change from line to line.
They do
not depend on the nuclear charges, \(N\), or the minimizer.

Let two nuclei be located at distinct points \(R_1,R_2\in\mathbb R^3\), with
nuclear
charges \(Z_1,Z_2>0\). We write
\[
    Z=Z_1+Z_2
\]
for the total nuclear charge. The corresponding nuclear potential is
\[
    V(x)=\frac{Z_1}{|x-R_1|}+\frac{Z_2}{|x-R_2|}.
\]

We write \(\mathcal{D}(\rho,\sigma)\) for the Coulomb interaction
\[
    \mathcal{D}(\rho,\sigma)=\frac{1}{2}\int_{\mathbb R^3\times \mathbb R^3}
    \frac{\rho(x)\sigma(y)}{|x-y|} dxdy,
\]
whenever the integral is absolutely convergent.

\begin{definition}(TFDW energy functional)
    For nonnegative densities \(\rho\) satisfying
    \[
        \sqrt{\rho}\in H^1(\mathbb R^3),
    \]
    we define the two-center TFDW energy by
    \[
        \begin{aligned}
            \mathcal{E}_V^{\mathrm{TFDW}}(\rho)
            &=
            c_{\mathrm{TF}}
            \int_{\mathbb{R}^3}\rho(x)^{5/3}\,dx
            -
            \int_{\mathbb{R}^3} V(x) \rho(x)\,dx
            +
            \mathcal{D}(\rho,\rho)
            \\
            &+
            c_{\mathrm W}
            \int_{\mathbb{R}^3}
            \left|\nabla\sqrt{\rho(x)}\,dx\right|^2\,dx
            -
            c_{\mathrm D}
            \int_{\mathbb{R}^3}
            \rho(x)^{4/3}\,dx.
        \end{aligned}
    \]
\end{definition}

All terms are finite on this class, by the constraint \(\int_{\mathbb R^3}\rho\,dx=N\).
Sobolev interpolation gives \(\rho\in L^p(\mathbb R^3)\) for \(1\le p\le3\), so
that
\(\rho\in L^{5/3}(\mathbb R^3)\cap L^{1}(\mathbb R^3)\). The
Hardy--Littlewood--Sobolev
inequality bounds \(\mathcal{D}\) on \(L^{6/5}(\mathbb R^3)\), while
Hardy's
inequality followed by Cauchy--Schwarz bounds the nuclear attraction.

For a prescribed electron number \(N\ge0\), the corresponding variational
problem is
    \[
        E_V^{\mathrm{TFDW}}(N)
        =
        \inf
        \left\{
        \mathcal{E}_V^{\mathrm{TFDW}}(\rho):
        \rho\geq0,\,
        \sqrt{\rho}\in H^1(\mathbb{R}^3),\,
        \int_{\mathbb{R}^3}\rho(x)\,dx=N
        \right\}.
    \]

A density is called a minimizer if it attains the infimum in
\(E_V^{\mathrm{TFDW}}(N)\). From now on we reserve the symbol \(\rho_0\) for a
minimizer.

We now state the main result of this paper concerning the quantitative
asymptotic neutrality of the two-center TFDW model.

\begin{theorem}\label{thm:main}
(Asymptotic neutrality)
There exist constants \(C>0\) and \(Z_0\ge2\), depending only on
\(c_{\mathrm{TF}},c_{\mathrm D},c_{\mathrm W}\), such that for every pair of
distinct points \(R_1,R_2\in\mathbb R^3\), every \(Z_1,Z_2>0\) with
\(Z=Z_1+Z_2\ge Z_0\), and every \(N\ge0\), if
\(E_V^{\mathrm{TFDW}}(N)\) has a minimizer, then
\[
N \leq Z + C Z^{153/154} (1+\log Z)^{3/8}.
\]
\end{theorem}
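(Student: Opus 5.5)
We follow the Solovej / Frank--Nam--Van Den Bosch strategy for the ionization problem, adapted to two centers. Three ingredients are coupled: a comparison with the neutral TF density, an interior estimate for the screened potential, and an exterior \(L^1\) bound.

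\textbf{Step 1: a priori bounds.} We first show that a minimizer \(\rho_0\) solves the Euler--Lagrange equation
\[
\bigl(-c_{\mathrm W}\Delta+\tfrac53c_{\mathrm{TF}}\rho_0^{2/3}-\tfrac43c_{\mathrm D}\rho_0^{1/3}-\Phi\bigr)\sqrt{\rho_0}=-\mu\sqrt{\rho_0},
\qquad \mu\ge0,
\]
where \(\Phi=V-\rho_0*|x|^{-1}\). We then prove a rough bound \(N\le CZ\), uniform in \(R_1,R_2\). For this we use Lieb's trick: multiply the equation by \(|x-R_i|\sqrt{\rho_0}\), average over \(i\) weighted by \(Z_i/Z\), and absorb the Dirac term by \(\rho^{1/3}\le\epsilon\rho^{2/3}+C/\epsilon\).

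\textbf{Step 2: TF comparison.} Let \(\rho^{\mathrm{TF}}\) be the neutral TF minimizer for \(V\). We compare \(E^{\mathrm{TFDW}}_V(N)\) with \(E^{\mathrm{TF}}_V(Z)\).
\begin{itemize}
\item Upper bound: use a trial state built from \(\rho^{\mathrm{TF}}\), mollified near the nuclei. Its cost is the Scott-type \(O(Z^2)\) von Weizs\"acker correction and the \(O(Z^{5/3})\) Dirac term.
\item Lower bound: use the TF variational inequality, \(\mathcal E^{\mathrm{TF}}(\rho)\ge E^{\mathrm{TF}}(Z)+\mathcal D(\rho-\rho^{\mathrm{TF}},\rho-\rho^{\mathrm{TF}})\) for \(\rho\ge0\) (with \(\mu^{\mathrm{TF}}=0\)). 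Discard the gradient term and control \(\int\rho^{4/3}\) by interpolation.
\end{itemize}
Together these give \(\mathcal D(\rho_0-\rho^{\mathrm{TF}},\rho_0-\rho^{\mathrm{TF}})\le CZ^{2-\delta}\) for some \(\delta>0\). Since \(N\) may exceed \(Z\), this has to be run as a bootstrap with Step 1. All constants must be uniform in \(|R_1-R_2|\) and \(Z_1/Z_2\); we get this by using only the TF equation and Lieb--Simon properties, such as \(\rho^{\mathrm{TF}}\le C\min(V^{3/2},|x-R_i|^{-6})\) locally.

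\textbf{Step 3: screened potential and exterior estimate.} For \(r>0\), let \(U_r\) be the potential of \(\rho_0\) restricted to \(\{\min_i|x-R_i|<r\}\), and set \(\Phi_r=V-U_r\).
\begin{itemize}
\item Interior estimate: using Coulomb-norm control of \(\rho_0-\rho^{\mathrm{TF}}\) and Newton's theorem (averaging over spheres about each center), show \(\sup_{\text{exterior}}|\Phi_r|\) is small relative to TF screening, \(\lesssim r^{-4}+Z^{-a}r^{-b}\).
\item Exterior estimate: localize the Euler--Lagrange equation with a cutoff \(\chi\) supported outside the two balls, multiply by \(|x-R_i|\chi^2\sqrt{\rho_0}\), and use IMS. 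This bounds the outside electron number by
\[
\int_{\text{out}}\rho_0\le C\Bigl(\sup_{\partial}\Phi_r\,r+\int_{\text{transition}}\rho_0+r^{-1}\text{-type error}\Bigr).
\]
\end{itemize}

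\textbf{Step 4: conclusion.} We have \(N-Z=(\int_{\text{in}}\rho_0-Z)+\int_{\text{out}}\rho_0\). The first term is controlled by \(\Phi_r\) via Gauss's law, and the second by the exterior estimate. Choosing \(r=Z^{-\alpha}\) to balance the TF-comparison error against the localization error yields the exponent \(153/154\); the logarithm comes from Hardy--Littlewood-type dyadic sums.

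The main obstacle is Step 3's interior estimate uniform in geometry. When the nuclei are separated at scale comparable to \(r\), the two balls interact, and no explicit TF profile is available. We would first try splitting into cases \(|R_1-R_2|\le r\), where we treat the system as a single center at a weighted midpoint, and \(|R_1-R_2|>r\), where we use two separate balls and Newton's theorem about each center.
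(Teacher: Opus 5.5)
Your architecture matches the paper's: a Coulomb-norm comparison with $\rho^{\mathrm{TF}}$, a pointwise bound on the screened potential, an exterior mass bound, and Gauss's law for the inner charge. But the two places where uniformity in the two-center geometry has to be earned are not handled.

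\textbf{The interior estimate.} You flag this yourself, and your case split does not close it. Newton's theorem about $R_1$ says nothing about the part of $\varrho=\rho_0-\rho^{\mathrm{TF}}$ in the other ball. Restricting the signed $\varrho$ to one ball destroys the control by $\mathcal{D}(\varrho,\varrho)$, which is your only information, since $\mathcal{D}(\mathbf 1_B\varrho,\mathbf 1_B\varrho)$ is not bounded by $\mathcal{D}(\varrho,\varrho)$. And for $r<\ell<2r$ the two balls overlap anyway.

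The paper needs no cases. In Theorem~\ref{thm:screened-coulomb} it pairs $\varrho$ with $\psi=\theta v_t$, where $v_t(y)=1/\max\{|x-y|,t\}$ and $\theta$ is a Lipschitz function of $\delta$ that cuts off $K_r$ in a layer of width $t$. That layer lies in the two spherical shells $S_i$. The bound $|S_i\cap B_s(x)|\le Cts^2$ of Lemma~\ref{lem:shell-volume} holds for each shell wherever the other nucleus is. This gives $\|\nabla\psi\|_2\le Ct^{-1/2}(1+\log(r/t))^{1/2}$ and an $L^{5/2}$-error $Ct^{1/5}$ between $\mathbf 1_{K_r}(y)|x-y|^{-1}$ and $\psi$. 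Hence $\int_{K_r}\varrho(y)|x-y|^{-1}\,dy$ is bounded for every $x\in\mathbb{R}^3$. The harmonic barrier $b_r=r/|x-R_1|+r/|x-R_2|$ of Lemma~\ref{lem:exterior-barrier}, together with the TF tail bounds, then yields both $S_r\le Cr^{-3}+2rB_Z$ and $|\int_{K_r}\varrho\,dx|\le 2rB_Z$. The logarithm in the theorem comes from this $\log(r/t)$, not from dyadic sums.

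\textbf{The Lieb-type weight in Steps 1 and 3.} With $w=\sum_i(Z_i/Z)|x-R_i|$ the pair term is fine, since $w(x)+w(y)\ge|x-y|$. But $Vw$ contains $\frac{Z_1Z_2}{Z}\frac{|x-R_2|}{|x-R_1|}\approx\frac{Z_1Z_2\,\ell}{Z|x-R_1|}$ near $R_1$, so $\int Vw\rho_0$ cannot be bounded by $CZN$ uniformly in $\ell$. Likewise, in the exterior step, $|x-R_1|[\Phi_r]_+$ just outside $B_r(R_2)$ is controlled only by $\ell S_r/r$. The weight $\delta$ does give $\delta V\le Z$ and $\delta[\Phi_r]_+\le S_r$. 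However, it violates $\delta(x)+\delta(y)\ge|x-y|$ when $x$ and $y$ lie near different nuclei.

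The paper resolves this with the folding map $F$. It satisfies $|F|=\delta$, is $1$-Lipschitz, and is an isometry on each half-space $\Omega_j$. The paper inserts $F$ into an IMS localization combined with the binding inequality, rather than the Euler--Lagrange equation, averaged over directions $\nu$ and levels $l$. Only same-cell pairs are kept, at the price of $M_1^2+M_2^2\ge N^2/2$. This gives $N\le 4Z+CZ^{2/3}+C$ and Corollary~\ref{cor:exterior-mass}.

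\textbf{A smaller point.} Dropping the gradient term against an $O(Z^2)$ upper bound gives only $\mathcal{D}(\varrho,\varrho)\le CZ^2$, not $Z^{2-\delta}$. That would still suffice, since any exponent below $7/3$ works and the paper itself uses $Z^{25/11}$. But $153/154$ is the output of the paper's specific inputs: $Z^{25/11}$, $\|\varrho\|_{5/3}\le CZ^{7/5}$, $t=Z^{-29/77}$ and $r\approx Z^{-1/3+1/462}$. You cannot quote that exponent without redoing the optimization.
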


Since \(Z^{-1/154}(1+\log Z)^{3/8}\to0\) as \(Z\to\infty\), we obtain
\[
    N\leq Z+o(Z).
\]
The constants \(C\) and \(Z_0\) are uniform in the nuclear positions and in the
nuclear charges.

\subsection{Notation}

We collect here the notation used throughout the paper.

\paragraph{Two-center geometry.}
Let
\[
    \delta(x)=\min\{|x-R_1|,|x-R_2|\},
\]
so that \(\delta(x)\) is the distance from \(x\) to the nearer of the two
nuclei. We write
\[
    \Omega_1=\{x\in\mathbb R^3:|x-R_1|\leq|x-R_2|\},
    \qquad
    \Omega_2=\{x\in\mathbb R^3:|x-R_2|<|x-R_1|\}
\]
for the two regions consisting of the points that are closer to \(R_1\) and to
\(R_2\),
respectively. They are half-spaces separated by the perpendicular bisector plane
of the
segment \(R_1R_2\), and they partition \(\mathbb R^3\) up to a plane of measure
zero. For
\(r>0\) we call
\[
    K_r=\{x\in\mathbb R^3:\delta(x)\le r\}
\]
the interior region and
\[
    K_r^c=\{x\in\mathbb R^3:\delta(x)>r\}
\]
the exterior region, so that \(K_r\) is the union of the two closed balls of
radius \(r\)
centered at the nuclei. We also write
\[
    \ell=|R_1-R_2|>0
\]
for the distance between the nuclei.

\paragraph{Densities and potentials.}
Let \(\rho_0\) be a TFDW minimizer and let \(\rho^{\mathrm{TF}}\) be the neutral
Thomas--Fermi density of nuclear charge \(Z\). We write
\[
    U(x)=\int_{\mathbb R^3}\frac{\rho_0(y)}{|x-y|}\,dy,
    \qquad
    U^{\mathrm{TF}}(x)=\int_{\mathbb R^3}\frac{\rho^{\mathrm{TF}}(y)}{|x-y|}\,dy,
\]
and, for \(r>0\),
\[
    U_r (x)=\int_{K_r}\frac{\rho_0(y)}{|x-y|}\,dy,
    \qquad
    U_r^{\mathrm{TF}}(x)=\int_{K_r}\frac{\rho^{\mathrm{TF}}(y)}{|x-y|}\,dy .
\]
Away from the nuclei, we define the screened potentials
\[
    \Phi_r=V-U_r,
    \qquad
    \Phi_r^{\mathrm{TF}}=V-U_r^{\mathrm{TF}}.
\]

For \(r>0\), let
\[
\begin{aligned}
    e_r&=\mathbf 1_{\{\delta>r\}}\rho_0,
    &\qquad
    c_r&=\mathbf 1_{K_r}\rho_0,
    \\
    e_r^{\mathrm{TF}}&=\mathbf 1_{\{\delta>r\}}\rho^{\mathrm{TF}},
    &\qquad
    c_r^{\mathrm{TF}}&=\mathbf 1_{K_r}\rho^{\mathrm{TF}},
\end{aligned}
\]
so that \(\rho_0=e_r+c_r\) and
\(\rho^{\mathrm{TF}}=e_r^{\mathrm{TF}}+c_r^{\mathrm{TF}}\).
We also define
\[
    m_r=\int_{\{\delta>r\}}\rho_0\,dx,
    \qquad
    \mathcal{D}_r=\mathcal{D}(e_r,e_r),
\]
and
\[
    S_r=\sup_{\delta(x)\geq r}\delta(x)[\Phi_r(x)]_+ .
\]

\section{A Rough Bound on the Number of Electrons}
\label{sec:rough-bound}

In this section we prove a bound on \(N\) that does not use the comparison with
Thomas--Fermi theory. We begin with a one-center attraction estimate for a
density in
terms of its Thomas--Fermi and Coulomb energies, then we deduce an a priori
bound on
these two quantities for any minimizer. Finally we obtain the estimate
\[
    N\le 4Z+CZ^{2/3}+C .
\]

\subsection{A priori bounds for TFDW minimizers}

Since the minimizing energy is nonpositive in~\cite{NamVanDenBosch2017}, every
minimizer satisfies
\[
    \mathcal{E}_V^{\mathrm{TFDW}}(\rho_0)
    =
    E_V^{\mathrm{TFDW}}(N)
    \leq0 .
\]
Hence
\[
    \mathcal{D}(\rho_0,\rho_0)
    +
    \int_{\mathbb R^3}\rho_0^{5/3}
    \,dx\leq
    C\left(Z^{7/3}+N\right).
\]
This bound is independent of the distance between the two nuclei and of the
charge ratio,
and we use it in the sequel to control the Coulomb energy of the minimizer.

\begin{lemma}\label{lem:one-center-attraction}
Let \(\sigma\geq0\) satisfy
\(\sigma\in L^1(\mathbb R^3)\cap L^{5/3}(\mathbb R^3)\)
and \(\mathcal{D}(\sigma,\sigma)<\infty\). Then, for every \(R\in\mathbb R^3\),
\[
    \int_{\mathbb R^3}\frac{\sigma(x)}{|x-R|}\,dx
    \leq
    C
    \left(
    \int_{\mathbb R^3}\sigma^{5/3}
    \,dx\right)^{3/7}
    \mathcal{D}(\sigma,\sigma)^{1/7}.
\]
\end{lemma}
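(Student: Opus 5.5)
Translating to \(R=0\), we must bound \(\int\sigma/|x|\) by \(\|\sigma\|_{5/3}^{5/7}\mathcal D(\sigma,\sigma)^{1/7}\). Note the exponent check: under the scaling \(\sigma_\lambda(x)=\lambda^{a}\sigma(\lambda x)\) the left side scales as \(\lambda^{a-2}\), \(\int\sigma^{5/3}\) as \(\lambda^{5a/3-3}\), and \(\mathcal D\) as \(\lambda^{2a-5}\). Then \((5a/3-3)\cdot 3/7+(2a-5)/7=(5a-9+2a-5)/7=a-2\), so the inequality is scaling consistent for every \(a\). This suggests a split at a radius \(s\) followed by optimization in \(s\).

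The plan is to write \(1/|x|=\mathbf 1_{|x|<s}/|x|+\mathbf 1_{|x|\ge s}/|x|\). For the inner part, Hölder with exponents \(5/3\) and \(5/2\) gives
\[
\int_{|x|<s}\frac{\sigma}{|x|}\le \|\sigma\|_{5/3}\Big(\int_{|x|<s}|x|^{-5/2}\Big)^{2/5}=C\|\sigma\|_{5/3}\,s^{1/5}.
\]
For the outer part we use the Coulomb energy. The function \(g_s(x)=\min(1/|x|,1/s)\) is the potential of the uniform unit surface measure \(\mu_s\) on the sphere \(|x|=s\): \(g_s=\mu_s*|\cdot|^{-1}\). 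Hence, by Cauchy--Schwarz for the positive definite form \(\mathcal D\),
\[
\int\frac{\sigma}{|x|}\mathbf 1_{|x|\ge s}\le\int\sigma g_s=2\mathcal D(\sigma,\mu_s)\le 2\mathcal D(\sigma,\sigma)^{1/2}\mathcal D(\mu_s,\mu_s)^{1/2}=C\mathcal D(\sigma,\sigma)^{1/2}s^{-1/2},
\]
since \(\mathcal D(\mu_s,\mu_s)=1/(2s)\). To make this rigorous, use the Fourier representation \(\mathcal D(f,g)=C\int\hat f\overline{\hat g}|k|^{-2}dk\), or approximate \(\mu_s\) by a smooth radial shell density. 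Finiteness holds because \(\sigma\in L^1\cap L^{5/3}\).

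Adding the two pieces gives \(\int\sigma/|x|\le C(As^{1/5}+Bs^{-1/2})\) with \(A=\|\sigma\|_{5/3}\) and \(B=\mathcal D^{1/2}\). We choose \(s\) to balance the terms, \(s^{7/10}=B/A\), which yields
\[
C A^{5/7}B^{2/7}=C\Big(\int\sigma^{5/3}\Big)^{3/7}\mathcal D(\sigma,\sigma)^{1/7}.
\]
If \(\mathcal D=0\) then \(\sigma=0\) and the claim is trivial. No step is a serious obstacle. The only point needing care is the justification of the Cauchy--Schwarz bound for the Coulomb form with a singular measure, handled as above.
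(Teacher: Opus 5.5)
Your proof is correct and is essentially the paper's argument. The paper also splits off the part near $R$ (using $(|x-R|^{-1}-s^{-1})_+$ rather than your sharp indicator) and bounds it by H\"older as $C\|\sigma\|_{5/3}s^{1/5}$. It then bounds $\int\sigma\psi_s$, with $\psi_s=\min\{|x-R|^{-1},s^{-1}\}$, through $\dot H^1$ duality and $\|\nabla\psi_s\|_2^2=4\pi/s$, which is your shell-measure Cauchy--Schwarz bound $2\mathcal D(\sigma,\mu_s)\le\sqrt{2\mathcal D(\sigma,\sigma)/s}$ in dual form since $\psi_s=\mu_s*|\cdot|^{-1}$, followed by the same optimization in $s$.
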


\begin{proof}
Let \(\mathcal{D}_\sigma=\mathcal{D}(\sigma,\sigma)\).
Without loss of generality, we assume \(\sigma\neq0\) and \(\mathcal{D}_\sigma>0\).
For every scalar \(s>0\), we define
\[
\psi_s(x)=\min\{|x-R|^{-1},s^{-1}\}.
\]
From H\"older's inequality, we obtain
\[
\begin{aligned}
\int_{\mathbb R^3}\frac{\sigma(x)}{|x-R|}\,dx
&=\int_{\mathbb R^3}\sigma(x)(|x-R|^{-1}-s^{-1})_+\,dx+\int_{\mathbb R^3}\sigma\psi_s\,dx\\
&\le C\|\sigma\|_{5/3}s^{1/5}+\int_{\mathbb R^3}\sigma\psi_s\,dx.
\end{aligned}
\]
For every \(\psi\in C_c^\infty(\mathbb R^3)\),
Fourier inversion and Cauchy--Schwarz inequality give
\begin{equation}\label{eq:fourier-parseval}
\begin{aligned}
\left|\int_{\mathbb R^3}\sigma\psi\,dx\right|
&=\frac{1}{(2\pi)^3}\left|\int_{\mathbb R^3}\hat\sigma(\xi)\overline{\hat\psi(\xi)}\,d\xi\right|\\
&=\frac{1}{(2\pi)^3}\left|\int_{\mathbb R^3}\frac{\hat\sigma(\xi)}{|\xi|}\,|\xi|\overline{\hat\psi(\xi)}\,d\xi\right|\\
&\leq\frac{1}{(2\pi)^3}\left(\int_{\mathbb R^3}\frac{|\hat\sigma(\xi)|^2}{|\xi|^2}\,d\xi\right)^{1/2}
\left(\int_{\mathbb R^3}|\xi|^2|\hat\psi(\xi)|^2\,d\xi\right)^{1/2}\\
&=\frac{1}{(2\pi)^3}\left(4\pi^2\mathcal{D}_\sigma\right)^{1/2}
\left((2\pi)^3\|\nabla\psi\|_2^2\right)^{1/2}\\
&=\left(\frac{\mathcal{D}_\sigma}{2\pi}\right)^{1/2}\|\nabla\psi\|_2\\
\end{aligned}
\end{equation}
Although \(\psi_s\notin L^2\), it belongs to the homogeneous Sobolev space
\(\dot H^1\), and
\[
\int_{\mathbb R^3}|\nabla\psi_s|^2\,dx=\frac{4\pi}{s}.
\]
Multiplying \(\psi_s\) by a sequence of smooth cutoffs \(\eta_k\), with
\(\eta_k=1\) in \(B_{R_k}\) and \(R_k\to\infty\), and then mollifying at a
scale tending to zero, yields \(\varphi_k\in C_c^\infty(\mathbb R^3)\) with
\(\varphi_k\to\psi_s\) and \(\nabla\varphi_k\to\nabla\psi_s\) in \(L^2\).
Applying~\eqref{eq:fourier-parseval} to \(\varphi_k\) and letting \(k\to\infty\)
extends it to \(\psi_s\), the pairings converging by dominated convergence
since \(0\le\varphi_k\le s^{-1}\) and \(\sigma\in L^1\).
Consequently,
\[
\int_{\mathbb R^3}\frac{\sigma(x)}{|x-R|}\,dx
 \le C\|\sigma\|_{5/3}s^{1/5}+\sqrt{2\mathcal{D}_\sigma/s}.
\]
Since the preceding bound holds for every \(s>0\), we choose \(s\) so that the
two terms balance and obtain
\[
\int_{\mathbb R^3}\frac{\sigma(x)}{|x-R|}\,dx
\le C\|\sigma\|_{5/3}^{5/7}\mathcal{D}_\sigma^{1/7}
=C\left(\int_{\mathbb R^3}\sigma^{5/3}\,dx\right)^{3/7}
\mathcal{D}(\sigma,\sigma)^{1/7},
\]
which is the asserted bound.
\end{proof}

\begin{proposition}\label{prop:hartree-apriori}
Let \(\rho_0\) be a minimizer. Then
\begin{equation}
\mathcal{D}(\rho_0,\rho_0)
+
\int_{\mathbb R^3}\rho_0^{5/3}
\,dx\leq
C\left(Z^{7/3}+N\right).
\end{equation}
\end{proposition}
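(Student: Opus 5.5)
We start from the inequality \(\mathcal{E}_V^{\mathrm{TFDW}}(\rho_0)=E_V^{\mathrm{TFDW}}(N)\le 0\), quoted above from~\cite{NamVanDenBosch2017}. We drop the nonnegative von Weizs\"acker term and rearrange:
\[
c_{\mathrm{TF}}\int\rho_0^{5/3}\,dx+\mathcal{D}(\rho_0,\rho_0)
\le \int V\rho_0\,dx + c_{\mathrm D}\int\rho_0^{4/3}\,dx .
\]
The plan is to absorb both terms on the right into the left-hand side, up to errors of size \(C(Z^{7/3}+N)\).

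\emph{Dirac term.} Interpolation between \(L^1\) and \(L^{5/3}\) gives \(\rho^{4/3}\le \varepsilon\rho^{5/3}+C_\varepsilon\rho\) pointwise. Hence
\[
c_{\mathrm D}\int\rho_0^{4/3}\le \tfrac{c_{\mathrm{TF}}}{4}\int\rho_0^{5/3}+CN .
\]
This is the only source of the \(N\) in the bound.

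\emph{Nuclear attraction.} Write \(\int V\rho_0=Z_1\int\rho_0|x-R_1|^{-1}+Z_2\int\rho_0|x-R_2|^{-1}\). Apply Lemma~\ref{lem:one-center-attraction} with \(\sigma=\rho_0\) at each center. Write \(T=\int\rho_0^{5/3}\) and \(D=\mathcal{D}(\rho_0,\rho_0)\). Then
\[
\int V\rho_0\le C(Z_1+Z_2)\,T^{3/7}D^{1/7}=CZ\,T^{3/7}D^{1/7}.
\]
The lemma applies because \(\rho_0\in L^1\cap L^{5/3}\) and \(D<\infty\), as noted after the definition of the functional. Its constant does not depend on \(R\), so the bound is uniform in the positions \(R_1,R_2\) and in the ratio \(Z_1/Z_2\).

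Now use Young's inequality with exponents \(7/3\), \(7\), and \(7/3\). The exponents sum correctly because \(3/7+1/7+3/7=1\). This gives
\[
CZ\,T^{3/7}D^{1/7}\le \tfrac{c_{\mathrm{TF}}}{4}T+\tfrac12 D+C Z^{7/3}.
\]

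Combining the two estimates, we obtain \(\tfrac{c_{\mathrm{TF}}}{2}T+\tfrac12 D\le CZ^{7/3}+CN\), which is the claim.

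The main (mild) obstacle is being careful with which facts are actually available. The first is the nonpositivity of the minimal energy; we take it from the cited reference. If that were unavailable, we would compare with the trial state of zero density in the sense of a limit, which gives \(E(N)\le 0\) by spreading mass out. The second is that the lemma's constant is independent of the center, which its proof shows. Everything else is routine bookkeeping of exponents.
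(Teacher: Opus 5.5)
Your proof is correct, but it reduces the two-center attraction to the one-center lemma differently from the paper.

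\emph{The paper's route.} It splits the density, $\rho_i=\rho_0\mathbf 1_{\Omega_i}$, and uses $V\le Z/|x-R_i|$ on $\Omega_i$. It applies Lemma~\ref{lem:one-center-attraction} to each $\rho_i$. It must then recombine via $A_i^{3/7}\mathcal{D}_i^{1/7}\le(A_i+\mathcal{D}_i)^{4/7}$, concavity of $t\mapsto t^{4/7}$, and $\mathcal{D}(\rho_1,\rho_2)\ge0$ before Young's inequality.

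\emph{Your route.} You split the potential instead, using linearity of $V$ in the nuclei, and apply the lemma to all of $\rho_0$ about each center. Since the lemma's constant is uniform in the center, the weights add to $Z_1+Z_2=Z$. A single three-factor Young inequality with exponents $(7/3,7,7/3)$ then closes the argument.

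\emph{Comparison.} Your version is shorter and needs no positivity of cross Coulomb terms. It also extends verbatim to $K$ nuclei with a constant independent of $K$. By contrast, the paper's cell decomposition gives every cell the full weight $Z$, and its concavity step produces a factor $K^{3/7}$, hence $CKZ^{7/3}$ after Young. The paper's splitting mirrors the bound $\delta V\le Z$ reused in the folded localization of Lemma~\ref{lem:folded-localization-rough}, but it gains nothing for this proposition.

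Your fallback justification of $E_V^{\mathrm{TFDW}}(N)\le 0$ by dilating a trial density is also valid. Every term of the functional, including the nuclear attraction, tends to zero under mass-preserving dilation.
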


\begin{proof}
Let
\[
    \rho_i=\rho_0\,\mathbf 1_{\Omega_i}
    \qquad i=1,2.
\]
Then
\[
\rho_0=\rho_1+\rho_2
\qquad\text{a.e.}
\]

For \(x\in\Omega_i\),
\[
    |x-R_i|
    \leq
    |x-R_j|,
    \qquad j\neq i,
\]
and hence
\[
V(x)
=
\frac{Z_1}{|x-R_1|}
+
\frac{Z_2}{|x-R_2|}
\leq
\frac{Z_1+Z_2}{|x-R_i|}
=
\frac{Z}{|x-R_i|}.
\]
Therefore
\[
\begin{aligned}
\int_{\mathbb R^3}V\rho_0
\,dx&=
\sum_{i=1}^2
\int_{\Omega_i}V\rho_0
\,dx\\
&\leq
Z
\sum_{i=1}^2
\int_{\mathbb R^3}
\frac{\rho_i(x)}{|x-R_i|}\,dx.
\end{aligned}
\]
From Lemma~\ref{lem:one-center-attraction}, we obtain
\[
\int_{\mathbb R^3}V\rho_0
\,dx\leq
CZ
\sum_{i=1}^2
A_i^{3/7} \mathcal{D}_i^{1/7},
\]
where
\[
A_i
=
\int_{\mathbb R^3}\rho_i^{5/3}\,dx,
\qquad
\mathcal{D}_i
=
\mathcal{D}(\rho_i,\rho_i).
\]

Since \(A_i,\mathcal{D}_i\geq0\),
\[
A_i^{3/7}\mathcal{D}_i^{1/7}
\leq
(A_i+\mathcal{D}_i)^{4/7}.
\]
Since the function \(t\mapsto t^{4/7}\) is concave,
\[
\sum_{i=1}^2
(A_i+\mathcal{D}_i)^{4/7}
\leq
C
\left(
\sum_{i=1}^2(A_i+\mathcal{D}_i)
\right)^{4/7}.
\]
Moreover,
\[
A_1+A_2
=
\int_{\mathbb R^3}\rho_0^{5/3}
\,dx\]
and
\[
    \mathcal{D}(\rho_0,\rho_0)
    =
    \mathcal{D}(\rho_1,\rho_1)
    +
    \mathcal{D}(\rho_2,\rho_2)
    +
    2\mathcal{D}(\rho_1,\rho_2)
    \geq
    \mathcal{D}(\rho_1,\rho_1)
    +
    \mathcal{D}(\rho_2,\rho_2).
\]

Thus
\[
\int_{\mathbb R^3}V\rho_0
\,dx\leq
CZ
\left(
\int_{\mathbb R^3}\rho_0^{5/3}
\,dx+
\mathcal{D}(\rho_0,\rho_0)
\right)^{4/7}.
\]

Let
\[
X
=
\int_{\mathbb R^3}\rho_0^{5/3}
\,dx+
\mathcal{D}(\rho_0,\rho_0).
\]
By Young's inequality,
for every \(\varepsilon>0\),
\[
Z X^{4/7}
\leq
\varepsilon X
+
C_\varepsilon Z^{7/3}.
\]
Hence, after choosing \(\varepsilon>0\) small enough,
\begin{equation}\label{eq:nuclear-absorbed}
\int_{\mathbb R^3}V\rho_0
\,dx\leq
\frac{c_{\mathrm{TF}}}{4}
\int_{\mathbb R^3}\rho_0^{5/3}
\,dx+
\frac14\mathcal{D}(\rho_0,\rho_0)
+
CZ^{7/3}.
\end{equation}

We next estimate the Dirac term. Young's
inequality gives, for every \(t\geq0\),
\begin{equation}\label{eq:dirac-young}
c_{\mathrm D}t^{4/3}
\leq
\frac{c_{\mathrm{TF}}}{4}t^{5/3}
+
\frac{c_{\mathrm D}^2}{c_{\mathrm{TF}}}t.
\end{equation}
Integrating~\eqref{eq:dirac-young}, we have
\[
c_{\mathrm D}
\int_{\mathbb R^3}\rho_0^{4/3}
\,dx\leq
\frac{c_{\mathrm{TF}}}{4}
\int_{\mathbb R^3}\rho_0^{5/3}
\,dx+
CN.
\]

Combining with~\eqref{eq:nuclear-absorbed}, we obtain
\[
\begin{aligned}
\mathcal{E}_V^{\mathrm{TFDW}}(\rho_0)
&\geq
\frac{c_{\mathrm{TF}}}{2}
\int_{\mathbb R^3}\rho_0^{5/3}
\,dx+
\frac34\mathcal{D}(\rho_0,\rho_0)\\
&\qquad
+c_{\mathrm W}
\int_{\mathbb R^3}|\nabla\sqrt\rho_0|^2
\,dx-
C(Z^{7/3}+N).
\end{aligned}
\]
Then
\[
\mathcal{E}_V^{\mathrm{TFDW}}(\rho_0)
\geq
c
\left(
\int_{\mathbb R^3}\rho_0^{5/3}
\,dx+
\mathcal{D}(\rho_0,\rho_0)
\right)
-
C(Z^{7/3}+N)
\]
for some \(c>0\).

Since \(\rho_0\) is a minimizer,
\[
    \mathcal{E}_V^{\mathrm{TFDW}}(\rho_0)
    =
    E_V^{\mathrm{TFDW}}(N)
    \leq0.
\]
Hence
\[
    \mathcal{D}(\rho_0,\rho_0)
    +
    \int_{\mathbb R^3}\rho_0^{5/3}
    \,dx\leq
    C(Z^{7/3}+N),
\]
which ends the proof.
\end{proof}

\subsection{A rough bound on \texorpdfstring{\(N\)}{N}}

Writing \(\ell=|R_1-R_2|>0\), without loss of generality, we may assume that
\[
    R_1=-\frac{\ell}{2}e,
    \qquad
    R_2=\frac{\ell}{2}e,
\]
where
\[
    e=(1,0,0).
\]

We decompose \(x\) as
\[
    x=te+x_{\perp}
\]
and define
\[
    F(x)=
    \left(
    \frac{\ell}{2}-|t|
    \right)e+x_\perp.
\]

Then \(F\) is \(1\)-Lipschitz, and
\[
    |F(x)|
    =
    \min\{|x-R_1|,|x-R_2|\}
    =
    \delta(x).
\]

Moreover, the restriction of \(F\) to \(\Omega_i\)
is an isometry.

Write \(E_0^{\mathrm{TFDW}}\) for the variational energy with zero external
potential. Nam and Van Den Bosch~\cite[Lemma~5]{NamVanDenBosch2017}
proved that, for every \(0\le M\le N\),
\begin{equation}\label{eq:binding-inequality}
\begin{aligned}
E_V^{\mathrm{TFDW}}(N)
&\le E_V^{\mathrm{TFDW}}(M)+E_0^{\mathrm{TFDW}}(N-M),
\\
E_0^{\mathrm{TFDW}}(m)
&<0 \qquad \text{when } m>0.
\end{aligned}
\end{equation}

Let \(f_0,f_1\) be nonnegative Lipschitz functions with
\(f_0^2+f_1^2=1\), and let \(\rho_0\) be a minimizer with \(\int_{\mathbb R^3}\rho_0\,dx=N\).
The densities \(f_j^2\rho_0\) are admissible.
Applying~\eqref{eq:binding-inequality} to their charges and using the IMS
identity gives
\begin{equation}\label{eq:partition-localization}
\begin{aligned}
\iint_{\mathbb R^3\times\mathbb R^3}\frac{f_0(x)^2\rho_0(x)f_1(y)^2\rho_0(y)}{|x-y|}\,dx\,dy
&\le \int_{\mathbb R^3} V f_1^2\rho_0
 \,dx+\alpha\int_{\{0<f_1<1\}}\rho_0\,dx\\
&\quad+c_{\mathrm W}\int_{\mathbb R^3}
 (|\nabla f_0|^2+|\nabla f_1|^2)\rho_0\,dx.
\end{aligned}
\end{equation}
Here \(\alpha=c_{\mathrm D}^2/(4c_{\mathrm{TF}})\).
The local TF--Dirac difference is bounded pointwise by
\(\alpha\rho_0\,\mathbf1_{\{0<f_1<1\}}\).
Indeed, since \(f_0^2+f_1^2=1\) and \(0\le f_j\le1\),
\[
0\le P_{4/3}\le P_{5/3}\le1,
\qquad
P_p=1-f_0^{2p}-f_1^{2p}.
\]
The local difference is then
\(c_{\mathrm D}P_{4/3}\rho_0^{4/3}-c_{\mathrm{TF}}P_{5/3}\rho_0^{5/3}\).
For \(0\le b\le a\) and \(t\ge0\), the elementary inequality
\[
c_{\mathrm D}b\,t^{4/3}-c_{\mathrm{TF}}a\,t^{5/3}
\le\frac{c_{\mathrm D}^2b^2}{4c_{\mathrm{TF}}a}\,t
\]
applied with \(a=P_{5/3}\), \(b=P_{4/3}\), \(t=\rho_0\) yields
\[
c_{\mathrm D}P_{4/3}\rho_0^{4/3}-c_{\mathrm{TF}}P_{5/3}\rho_0^{5/3}
\le\frac{c_{\mathrm D}^2P_{4/3}^2}{4c_{\mathrm{TF}}P_{5/3}}\rho_0
\le\alpha\rho_0
\]
whenever \(P_{5/3}>0\), while both local differences vanish when
\(P_{5/3}=0\).
The Coulomb cross term in~\eqref{eq:partition-localization}
has no factor \(1/2\), because it is \(2\mathcal{D}(f_0^2\rho_0,f_1^2\rho_0)\).

\begin{lemma}\label{lem:folded-localization-rough}
For every \(h>0\),
\begin{equation}\label{eq:folded-localization-rough}
    \frac{N^2}{16}
    \leq
    \frac{ZN}{4}
    +
    Ch\mathcal{D}(\rho_0,\rho_0)
    +
    C\left(
    h
    +
    \frac{1}{h}
    \right)N.
\end{equation}
\end{lemma}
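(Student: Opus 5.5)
We plan to use the partition inequality~\eqref{eq:partition-localization} with half-space cutoffs built from the folding map \(F\), and to average over directions and over the position of the cut. This is the Lieb/Benguria--Lieb type argument. Fix a unit vector \(\nu\) and a parameter \(s\in\mathbb R\). Let \(\chi\) be a smooth monotone profile with \(\chi(\tau)^2+\chi(-\tau)^2=1\), equal to \(1\) for \(\tau\le -1\) and \(0\) for \(\tau\ge1\), with \(|\chi'|\le C\). Set
\[
f_0(x)=\chi\bigl((\nu\cdot F(x)-s)/h\bigr),\qquad f_1(x)=\chi\bigl(-(\nu\cdot F(x)-s)/h\bigr).
\]
Since \(F\) is \(1\)-Lipschitz, \(|\nabla f_j|\le C/h\). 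The last term of~\eqref{eq:partition-localization} is therefore at most \(CN/h^2\). Because the transition layer \(\{0<f_1<1\}\) is a slab of width \(2h\) in the \(F\)-variable, after averaging in \(s\) the \(\alpha\)-term becomes small. Rather than averaging \(s\) over a fixed window, I will choose \(s\) (depending on \(\nu\)) as the median of the pushforward \(\mu_\nu\) of \(\rho_0\) under \(x\mapsto \nu\cdot F(x)\). This makes each side carry charge roughly \(N/2\) up to the layer.

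\emph{Lower bound for the left side.} With the median choice, \(\int f_0^2\rho_0\) and \(\int f_1^2\rho_0\) are each at least \(N/2-\int_{\text{layer}}\rho_0\). The Coulomb cross term \(\iint f_0^2\rho_0(x)f_1^2\rho_0(y)/|x-y|\) must be bounded below. Here the folding is problematic, because \(|x-y|\) is not \(|F(x)-F(y)|\). A cleaner route is to use \(1/|x-y|\ge\) (average over \(\nu\) of) \(\mathbf 1\{\text{points separated by the hyperplane}\}\) weights. That is harder; instead I would bound the right side from above and the left side from below only in terms of charges. Concretely, I would use the elementary bound \(\iint \rho_0(x)\rho_0(y)\,g(x,y)\ge\) (mass)\(^2\) times the infimum of \(g\) only at distances \(\lesssim h\). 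I expect this step to fail, so the actual plan is to exploit the \(h\mathcal D(\rho_0,\rho_0)\) term in the statement. The inequality \(1\le h/|x-y|+\mathbf 1\{|x-y|\le h\}\) shows that pairs at distance \(\le h\) can be traded for Coulomb energy. In the other direction, I would write the cross term as \(\ge\) \(\tfrac{N^2}{4}\cdot\) something minus corrections.

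\emph{Right side.} On \(\operatorname{supp} f_1\) we have \(\nu\cdot F\ge s-h\). Averaging over \(\nu\in\mathbb S^2\) and summing the two complementary choices (swap \(f_0,f_1\), i.e.\ \(\nu\to-\nu\)) gives \(\int V(f_0^2+f_1^2)\rho_0=\int V\rho_0\) on the right. On the left, one gets the symmetric sum of cross terms. The key Lieb-type identity I intend to use is \(|F(x)|\le|x-y|+|F(y)|\), together with \(V(x)\le Z/\delta(x)=Z/|F(x)|\). I would multiply the partition inequality by weights \(|F(\cdot)|\) in a Lieb fashion. In practice this means choosing the cut so that \(Z\) times the attraction term produces \(ZN/4\): after averaging over \(\nu\), the quantity \(\int\!\!\int \mathbf 1\{\nu\cdot F(x)<s<\nu\cdot F(y)\}\) relates to the \(|F(x)-F(y)|\)-weighted interaction. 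The constants \(1/16\) and \(1/4\) are consistent with masses \(N/2\cdot N/2\) paired against \(Z\cdot N/2\), with an extra factor \(1/2\) from the angular average.

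\emph{Main obstacle.} The hard step is the lower bound of the cross Coulomb term, \(\gtrsim N^2/16\), by something measured in the folded variable. This needs \(|x-y|\ge|F(x)-F(y)|\) in the right direction, which is the opposite of what the Lipschitz property gives. I would first try to handle it by writing \(1/|x-y|\ge 1/(|x-y|+h)\) and absorbing nearby pairs via \(\mathbf 1\{|x-y|\le h\}\le h/|x-y|\). This is exactly what produces the error \(Ch\,\mathcal D(\rho_0,\rho_0)\). The gradient and Dirac terms then give the \(C(h+1/h)N\) term after averaging \(s\) over a window of width comparable to \(h\).
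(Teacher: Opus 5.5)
Your proposal has a genuine gap. It never obtains either the bound $ZN/4$ for the attraction or the lower bound $N^2/16$ for the cross term, and the devices you suggest cannot give them.

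\emph{The single cut.} With one cut at level $s$ (the median or any other choice), $\int Vf_1^2\rho_0$ is only controlled by $Z/(s-h)$ times the charge beyond the cut. The cross Coulomb term is only controlled from below by (charge)$^2$ divided by the spatial spread of $\rho_0$. The two sides thus carry unrelated length scales, $1/(s-h)$ versus $1/|x-y|$, and these cancel only after integrating over all cut positions. Summing the two labelings $f_0\leftrightarrow f_1$ is worse: it puts the full $\int V\rho_0$, typically of order $Z^{7/3}$, on the right.

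\emph{The right-hand side.} The missing idea is to integrate the partition inequality over \emph{all} levels $l\in(0,\infty)$, with $f_j=a_j((\nu\cdot F-l)/h)$, and then average over $\nu\in\mathbb S^2$. Then $\int_0^\infty f_1^2\,dl\le(\nu\cdot F)_+$ and $\int_{\mathbb S^2}(\nu\cdot z)_+\,d\omega=|z|/4$. With $|F|=\delta$ and $\delta V\le Z$, the attraction becomes $\frac14\int\delta V\rho_0\le ZN/4$; this is the precise form of your ``weight $|F|$'' remark. Each point lies in the transition layer for a set of levels of length at most $h$, which gives the $(\alpha h+C/h)N$ term.

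\emph{The cross term.} On the left, the $l$-integral turns the cross term into $\iint\rho_0(x)\rho_0(y)L(x,y)/|x-y|\,dx\,dy$, where $L(x,y)=\int_0^\infty f_0(x)^2f_1(y)^2\,dl$. Since $f_0(x)=f_1(y)=1$ whenever $\max\{p_\nu(x),0\}<l<p_\nu(y)-h$, with $p_\nu=\nu\cdot F$, one has $L(x,y)\ge(p_\nu(y)-\max\{p_\nu(x),0\})_+-h$. Symmetrize in $x\leftrightarrow y$ and $\nu\leftrightarrow-\nu$: the four positive parts add up to $|p_\nu(x)-p_\nu(y)|$. Together with $\int_{\mathbb S^2}|\nu\cdot z|\,d\omega=|z|/2$ this gives the lower bound $\frac18\iint\rho_0(x)\rho_0(y)\,|F(x)-F(y)|/|x-y|\,dx\,dy-2h\mathcal{D}(\rho_0,\rho_0)$. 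So the $h\mathcal{D}$ error comes from the layer width $h$ in the level variable, not from pairs with $|x-y|\le h$. Absorbing near pairs cannot produce an $N^2$ lower bound, because distant pairs contribute arbitrarily little.

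\emph{The direction of the key inequality.} You also have it reversed. The bound $|F(x)-F(y)|\le|x-y|$ is exactly what the Lipschitz property gives, and it is useless for a lower bound. What is needed is $|F(x)-F(y)|=|x-y|$ for $x,y$ in the same half-space $\Omega_i$, where $F$ is an isometry. Discarding the nonnegative contribution of pairs in different half-spaces gives $\iint\rho_0(x)\rho_0(y)|F(x)-F(y)|/|x-y|\,dx\,dy\ge M_1^2+M_2^2\ge N^2/2$, with $M_i=\int_{\Omega_i}\rho_0$. Multiplied by the $\frac18$ above, this is where $N^2/16$ comes from.
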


\begin{proof}
Use normalized surface measure \(d\omega=dS/(4\pi)\) on \(\mathbb S^2\).
For \(\nu\in\mathbb S^2\), \(l>0\) and \(h>0\), define
\[
q(s)=\min\{1,\max\{0,s\}\},\qquad
a_0(s)=\cos\!\left(\frac\pi2q(s)\right),\quad
a_1(s)=\sin\!\left(\frac\pi2q(s)\right),
\]
and set
\[
p_\nu(x)=\nu\cdot F(x),\qquad
f_j(x)=a_j((p_\nu(x)-l)/h),\quad j=0,1.
\]
These functions are nonnegative, satisfy \(f_0^2+f_1^2=1\), and are
constant outside \(A_{\nu,l}=\{l<p_\nu<l+h\}\).
Since \(F\) is \(1\)-Lipschitz,
\[
|\nabla f_0|^2+|\nabla f_1|^2
 \le\frac{2(\pi/2)^2}{h^2}\mathbf1_{A_{\nu,l}}
 \qquad\text{a.e.}
\]
Thus~\eqref{eq:partition-localization} implies
\[
I_{\nu,l}
 \le \int_{\mathbb R^3} V f_1^2\rho_0\,dx+
 \left(\alpha+\frac{2c_{\mathrm W}(\pi/2)^2}{h^2}\right)
 \int_{A_{\nu,l}}\rho_0\,dx,
\]
where
\[
I_{\nu,l}=\iint_{\mathbb R^3\times\mathbb R^3}
 \frac{f_0(x)^2\rho_0(x)f_1(y)^2\rho_0(y)}{|x-y|}\,dx\,dy.
\]
For each \(x\), the set of \(l>0\) with \(x\in A_{\nu,l}\) has length
at most \(h\), whereas
\(\int_0^\infty f_1(x)^2\,dl\le(p_\nu(x))_+\).
The elementary spherical identity
\[
\int_{\mathbb S^2}(\nu\cdot z)_+\,d\omega(\nu)=\frac{|z|}{4}
\]
and \(\delta V\le Z\) therefore give
\[
\int_{\mathbb S^2}\int_0^\infty\int_{\mathbb R^3}V f_1^2\rho_0\,dx\,dl\,d\omega
 \le\frac14\int_{\mathbb R^3}\delta V\rho_0\,dx\le\frac{ZN}{4}.
\]
The averaged transition error is bounded by
\((\alpha h+2c_{\mathrm W}(\pi/2)^2/h)N\).

We spell out the cross-term lower bound to keep track of the factors.
For \(a,b\in\mathbb R\), set
\[
\begin{aligned}
L_h(a,b)&=\int_0^\infty
 a_0((a-l)/h)^2a_1((b-l)/h)^2\,dl,\\
H(a,b)&=(b-\max\{a,0\})_+.
\end{aligned}
\]
On \(\max\{a,0\}<l<b-h\) the integrand equals one, so
\(L_h(a,b)\ge H(a,b)-h\).
Moreover,
\[
H(a,b)+H(b,a)+H(-a,-b)+H(-b,-a)=|a-b|.
\]
Symmetrizing in \(x,y\) and in \(\nu,-\nu\), and using
\(\int_{\mathbb S^2}|\nu\cdot z|\,d\omega=|z|/2\), yields
\[
\begin{aligned}
\int_{\mathbb S^2}\int_0^\infty I_{\nu,l}\,dl\,d\omega
&\ge\frac18\iint_{\mathbb R^3\times\mathbb R^3}
 \frac{\rho_0(x)\rho_0(y)|F(x)-F(y)|}{|x-y|}\,dx\,dy
 -2h\mathcal{D}(\rho_0,\rho_0)\\
&\ge\frac18(M_1^2+M_2^2)-2h\mathcal{D}(\rho_0,\rho_0)\\
&\ge\frac{N^2}{16}-2h\mathcal{D}(\rho_0,\rho_0),
\end{aligned}
\]
where \(M_i=\int_{\Omega_i}\rho_0\,dx\), \(M_1+M_2=N\), and we used that
\(F\) is an isometry on each \(\Omega_i\).
By Tonelli's theorem all nonnegative integrals can be interchanged. Both
displayed bounds are integrable, the upper one because the cross-term average
is finite and the lower one because \(|F(x)-F(y)|\le|x-y|\) and
\(\mathcal{D}(\rho_0,\rho_0)<\infty\).
Combining the two bounds gives
\[
\frac{N^2}{16}\le\frac{ZN}{4}+2h\mathcal{D}(\rho_0,\rho_0)
 +\left(\frac{c_{\mathrm D}^2}{4c_{\mathrm{TF}}}h
       +\frac{2c_{\mathrm W}(\pi/2)^2}{h}\right)N,
\]
which proves~\eqref{eq:folded-localization-rough}.
\end{proof}

\begin{corollary}
    \begin{equation}\label{eq:rough-mass-detailed}
    N
    \leq
    4Z
    +
    CZ^{2/3}
    +
    C.
\end{equation}
\end{corollary}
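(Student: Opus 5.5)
The plan is to combine Lemma~\ref{lem:folded-localization-rough} with the a priori bound of Proposition~\ref{prop:hartree-apriori} and then optimize in \(h\). By Proposition~\ref{prop:hartree-apriori}, \(\mathcal{D}(\rho_0,\rho_0)\le C(Z^{7/3}+N)\). Substituting this into \eqref{eq:folded-localization-rough} gives, for every \(h>0\),
\[
\frac{N^2}{16}\le \frac{ZN}{4}+ChZ^{7/3}+C\Big(h+\frac1h\Big)N .
\]

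The choice of \(h\) has to balance two competing terms. The term \(ChZ^{7/3}\) must be made comparable to \(N^2\)-size quantities, while the term \(CN/h\) must stay small relative to \(N^2\).

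I would choose \(h=Z^{-2/3}\). Then \(hZ^{7/3}=Z^{5/3}\), \(hN\le N\) for \(Z\ge1\), and \(N/h=Z^{2/3}N\). Multiplying by \(16\), this yields
\[
N^2\le 4ZN+CZ^{2/3}N+CN+CZ^{5/3}.
\]

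This is a quadratic inequality \(N^2-bN-c\le0\) with
\[
b=4Z+CZ^{2/3}+C,\qquad c=CZ^{5/3}.
\]
It gives \(N\le b+\sqrt{c}\), since any root satisfies \(N\le \tfrac{b+\sqrt{b^2+4c}}2\le b+\sqrt c\). Here \(\sqrt c=CZ^{5/6}\), which exceeds \(Z^{2/3}\), so this route gives \(N\le 4Z+CZ^{5/6}+CZ^{2/3}+C\).

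To reach the sharper error \(Z^{2/3}\), I would instead absorb the quadratic remainder more carefully. One option is to use \(c/N\) whenever \(N\ge Z\): then \(CZ^{5/3}/N\le CZ^{2/3}\), so
\[
N\le 4Z+CZ^{2/3}+C+CZ^{5/3}/N\le 4Z+CZ^{2/3}+C .
\]
If instead \(N<Z\), the claim is trivial. Dividing by \(N\) in the case \(N\ge Z\) is therefore the cleanest route.

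If \(Z\) is small (\(Z<1\)), I would take \(h=1\) instead. Then
\[
N^2\le 4ZN+CN+CZ^{7/3},
\]
and the same argument gives \(N\le 4Z+C+CZ^{7/6}\), which is \(\le 4Z+C\).

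The main, though mild, obstacle is handling the \(N\)-dependent Coulomb bound without losing a power. Dividing by \(N\) in the regime \(N\ge Z\), together with choosing \(h=\min\{1,Z^{-2/3}\}\), handles it.
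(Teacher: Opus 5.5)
Your proof is correct and is essentially the paper's argument: insert the a priori bound $\mathcal{D}(\rho_0,\rho_0)\le C(Z^{7/3}+N)$ of Proposition~\ref{prop:hartree-apriori} into Lemma~\ref{lem:folded-localization-rough}, optimize in $h$, and in the regime $N>Z$ divide by $N$, so that the $Z^{7/3}$ contribution is controlled by $N>Z$. The only difference is cosmetic: the paper takes the adaptive choice $h=\sqrt{N/(\mathcal{D}+N)}$ before inserting the a priori bound, which avoids your case split at $Z=1$, whereas you fix $h=\min\{1,Z^{-2/3}\}$ afterwards (and you were right to discard the quadratic-formula detour that only gives $Z^{5/6}$).
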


\begin{proof}
    If \(N\le Z\), the claimed bound is immediate. Hence assume
    \(N>Z>0\), and write \(\mathcal{D}=\mathcal{D}(\rho_0,\rho_0)\).
    From~\eqref{eq:folded-localization-rough},
    \[
            \frac{N^2}{16}
            \leq
            \frac{ZN}{4}
            +
            Ch\mathcal{D}(\rho_0,\rho_0)
            +
            C\left(
            h
            +
            \frac{1}{h}
            \right)N
    \]

    Taking \(h=\sqrt{N/(\mathcal{D}+N)}\), we obtain
    \[
        \frac{N^2}{16}
        \leq
        \frac{ZN}{4}
        +
        C\sqrt{(\mathcal{D}+N)N}.
    \]
    Thus
    \[
        N
        \leq
        4Z
        +
        C\sqrt{\frac{\mathcal{D}+N}{N}}.
    \]
    From Proposition~\ref{prop:hartree-apriori}
    \[
        \mathcal{D}
        \leq
        C(Z^{7/3}+N),
    \]
    and \(N>Z\), we have
    \[
       \sqrt{\frac{\mathcal{D}+N}{N}}
       \le C\sqrt{\frac{Z^{7/3}}{N}+1}
       \le C(Z^{2/3}+1).
    \]
    This proves~\eqref{eq:rough-mass-detailed}.
\end{proof}

\section{Comparison with the Neutral Thomas--Fermi Density}
\label{sec:tf-comparison}

In this section we prove a quantitative comparison between the TFDW
minimizer and the neutral Thomas--Fermi minimizer.

\subsection{The neutral Thomas--Fermi reference density}

We consider the Thomas--Fermi functional
\[
\mathcal{E}_V^{\mathrm{TF}}(\rho)
=
c_{\mathrm{TF}}
\int_{\mathbb R^3}{\rho}^{5/3}
\,dx-
\int_{\mathbb R^3}V\rho
\,dx+
\mathcal{D}(\rho,\rho).
\]
We consider the global minimizer of this functional over all nonnegative
\(\rho\in L^1(\mathbb R^3)\cap L^{5/3}(\mathbb R^3)\). The neutral TF theorem
of Lieb and Simon~\cite{LiebSimon1977} gives a unique minimizer
\(\rho^{\mathrm{TF}}\), satisfying
\[
\int_{\mathbb R^3}\rho^{\mathrm{TF}}\,dx=Z.
\]

The screened potential of the neutral minimizer is the nonnegative function
on \(\mathbb R^3\) given by
\[
\Phi^{\mathrm{TF}}=V-U^{\mathrm{TF}} .
\]
Its Lagrange multiplier vanishes, and the neutral Euler equation
reads as in~\cite{LiebSimon1977}
\begin{equation}\label{eq:tf-euler}
\rho^{\mathrm{TF}}=\bigl(\frac{3}{5c_{\mathrm{TF}}}\Phi^{\mathrm{TF}}\bigr)_+^{3/2}
\quad\text{a.e.}.
\end{equation}

Moreover, the neutral Thomas--Fermi minimizer satisfies
\begin{equation}\label{eq:tf-standard-estimate}
    \int_{\mathbb R^3}(\rho^{\mathrm{TF}})^{5/3}\,dx\le CZ^{7/3}.
\end{equation}
Indeed, the argument for~\eqref{eq:nuclear-absorbed} applies to every
nonnegative \(L^1\cap L^{5/3}\) density. Applying it to
\(\rho^{\mathrm{TF}}\) and using
\(\mathcal{E}_V^{\mathrm{TF}}(\rho^{\mathrm{TF}})
\le\mathcal{E}_V^{\mathrm{TF}}(0)=0\) absorbs the attraction into the
positive TF and Coulomb terms and gives this bound, uniformly in the
nuclear positions.

\begin{lemma}\label{lem:tf-energy-gap}
Let \(\rho^{\mathrm{TF}}\) be the neutral TF minimizer.
Then for every nonnegative
\(\rho\in L^1(\mathbb R^3)\cap L^{5/3}(\mathbb R^3)\),
\[
\mathcal{D}(\rho-\rho^{\mathrm{TF}},\rho-\rho^{\mathrm{TF}})
\leq
\mathcal{E}_V^{\mathrm{TF}}(\rho)
-
\mathcal{E}_V^{\mathrm{TF}}(\rho^{\mathrm{TF}}).
\]
\end{lemma}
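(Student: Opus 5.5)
The plan is to expand the energy around \(\rho^{\mathrm{TF}}\) and then invoke the neutral Euler equation~\eqref{eq:tf-euler} together with convexity of \(t\mapsto t^{5/3}\). Put \(\sigma=\rho-\rho^{\mathrm{TF}}\). Since \(\rho,\rho^{\mathrm{TF}}\in L^1\cap L^{5/3}\subset L^{6/5}\), the Hardy--Littlewood--Sobolev inequality shows that every Coulomb term below is finite. The quadratic part then expands exactly:
\[
\mathcal{D}(\rho,\rho)-\mathcal{D}(\rho^{\mathrm{TF}},\rho^{\mathrm{TF}})
=\mathcal{D}(\sigma,\sigma)+\int_{\mathbb R^3}U^{\mathrm{TF}}\sigma\,dx,
\]
using \(2\mathcal{D}(\rho^{\mathrm{TF}},\sigma)=\int U^{\mathrm{TF}}\sigma\). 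It follows that
\[
\mathcal{E}_V^{\mathrm{TF}}(\rho)-\mathcal{E}_V^{\mathrm{TF}}(\rho^{\mathrm{TF}})
=\mathcal{D}(\sigma,\sigma)
+\int_{\mathbb R^3}\Bigl(c_{\mathrm{TF}}\rho^{5/3}-c_{\mathrm{TF}}(\rho^{\mathrm{TF}})^{5/3}-\Phi^{\mathrm{TF}}(\rho-\rho^{\mathrm{TF}})\Bigr)dx,
\]
where \(\Phi^{\mathrm{TF}}=V-U^{\mathrm{TF}}\).

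It therefore suffices to show that the last integrand is pointwise nonnegative. The convex function \(g(t)=c_{\mathrm{TF}}t^{5/3}\) on \([0,\infty)\) has \(g'(t)=\frac53c_{\mathrm{TF}}t^{2/3}\), so the tangent-line inequality gives
\[
g(\rho)-g(\rho^{\mathrm{TF}})\ge g'(\rho^{\mathrm{TF}})(\rho-\rho^{\mathrm{TF}}).
\]
I will check that the Euler equation~\eqref{eq:tf-euler} makes \(\Phi^{\mathrm{TF}}\) a subgradient of \(g\) at \(\rho^{\mathrm{TF}}(x)\). Read with the exponent consistent with \(g'\), i.e.\ \(\frac53c_{\mathrm{TF}}(\rho^{\mathrm{TF}})^{2/3}=[\Phi^{\mathrm{TF}}]_+\), there are two cases:
\begin{itemize}
\item Where \(\rho^{\mathrm{TF}}>0\), we have \(\Phi^{\mathrm{TF}}=g'(\rho^{\mathrm{TF}})\), and the tangent-line inequality gives the claim directly.
\item Where \(\rho^{\mathrm{TF}}=0\), we have \(\Phi^{\mathrm{TF}}\le0\) and \(\rho\ge0\), so the integrand is \(g(\rho)-\Phi^{\mathrm{TF}}\rho\ge0\).
\end{itemize}
The statement notes that \(\Phi^{\mathrm{TF}}\ge0\) anyway, so only the first case really occurs.

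The only delicate point is integrability: we must split \(\int(V-U^{\mathrm{TF}})\sigma\) and \(\int\rho^{5/3}\) into separately convergent pieces. The terms \(\int V\rho\) and \(\int V\rho^{\mathrm{TF}}\) are finite by the \(L^1\cap L^{5/3}\) bound, via the splitting near the nuclei used in Lemma~\ref{lem:one-center-attraction}. The term \(\int U^{\mathrm{TF}}\rho\) is finite by Hardy--Littlewood--Sobolev. Hence the rearrangement above is legitimate, and dropping the nonnegative integral yields the lemma.

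I expect no real obstacle. The main care is matching the normalization of~\eqref{eq:tf-euler} with \(g'\), which is a constant-convention check.
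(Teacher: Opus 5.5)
Your proof is correct and follows the paper's argument. Both expand $\mathcal{D}(\rho,\rho)$ around $\rho^{\mathrm{TF}}$, use convexity of $t\mapsto c_{\mathrm{TF}}t^{5/3}$, and dispose of the linear term with the Euler equation~\eqref{eq:tf-euler}. The paper invokes strict positivity of $\rho^{\mathrm{TF}}$ (Lieb--Simon) to get $\frac53c_{\mathrm{TF}}(\rho^{\mathrm{TF}})^{2/3}=\Phi^{\mathrm{TF}}$ a.e., whereas your subgradient case split works directly from the $(\cdot)_+$ form and so avoids that input; your integrability bookkeeping is also more explicit than the paper's.
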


\begin{proof}
Expanding the Coulomb term,
\[
\mathcal{D}(\rho,\rho)
=
\mathcal{D}(\rho^{\mathrm{TF}},\rho^{\mathrm{TF}})
+2\mathcal{D}(\rho^{\mathrm{TF}},\rho-\rho^{\mathrm{TF}})
+\mathcal{D}(\rho-\rho^{\mathrm{TF}},\rho-\rho^{\mathrm{TF}}).
\]
Thus
\[
\begin{aligned}
&
\mathcal{E}_V^{\mathrm{TF}}(\rho)
-
\mathcal{E}_V^{\mathrm{TF}} (\rho^{\mathrm{TF}} )
-
\mathcal{D}(\rho-\rho^{\mathrm{TF}},\rho-\rho^{\mathrm{TF}})
\\
&=
c_{\mathrm{TF}}
\int_{\mathbb R^3}
\left(
\rho^{5/3}-(\rho^{\mathrm{TF}})^{5/3}
-\frac53 (\rho^{\mathrm{TF}})^{2/3}(\rho-\rho^{\mathrm{TF}})
\right)
\,dx\\
&\quad
+
\int_{\mathbb R^3}
\left(
\frac53c_{\mathrm{TF}}(\rho^{\mathrm{TF}})^{2/3}
-
V
+
U^{\mathrm{TF}}
\right)(\rho-\rho^{\mathrm{TF}})\,dx.
\end{aligned}
\]
The first term is nonnegative by convexity. Since \(\rho^{\mathrm{TF}}\) is
strictly positive in~\cite[Theorem~IV.3]{LiebSimon1977}, the Euler
equation~\eqref{eq:tf-euler} gives
\[
\frac53c_{\mathrm{TF}}(\rho^{\mathrm{TF}})^{2/3}-V+U^{\mathrm{TF}}=0
\qquad \text{a.e.},
\]
whence
\[
\int_{\mathbb R^3}
\left(\frac53c_{\mathrm{TF}}(\rho^{\mathrm{TF}})^{2/3}-V+U^{\mathrm{TF}}\right)
(\rho-\rho^{\mathrm{TF}})\,dx=0 .
\]
This proves the claim.
\end{proof}

Let \(\rho_i^{\mathrm{TF}}\) be the neutral TF minimizer for the single
nucleus of charge \(Z_i\) at \(R_i\), and let \(U_i^{\mathrm{TF}}\) denote
its Coulomb potential,
\[
    U_i^{\mathrm{TF}}(x)
    =
    \int_{\mathbb R^3}\frac{\rho_i^{\mathrm{TF}}(y)}{|x-y|}\,dy .
\]
We write \(\Phi_i=Z_i|x-R_i|^{-1}-U_i^{\mathrm{TF}}\).

For this minimizer,
\[
    0 < \Phi_i(x)\le A|x-R_i|^{-4},
    \qquad
    \rho_i^{\mathrm{TF}}(x)=\gamma\Phi_i(x)^{3/2},
    \qquad x\ne R_i ,
\]
where \(A=(5c_{\mathrm{TF}})^3/(3\pi^2)\) and
\(\gamma=(3/5c_{\mathrm{TF}})^{3/2}\). The bound on \(\Phi_i\) is
in~\cite[Theorem~4.2]{FrankNamVanDenBosch2018}.
Moreover, \(\Phi_i\) and \(\Phi^{\mathrm{TF}}\) are smooth away from the
nuclei in~\cite[Theorem~IV.5(b)]{LiebSimon1977}.

\begin{lemma}\label{lem:tf-pointwise-tail}
\[
\rho^{\mathrm{TF}}(x)
\leq
\sqrt{2}
[\rho_1^{\mathrm{TF}}(x)
+
\rho_2^{\mathrm{TF}}(x)]
\]
for every \(x\neq R_1,R_2\).
Moreover,
\[
0\leq\Phi^{\mathrm{TF}}(x)
\leq
\Phi_1(x)
+
\Phi_2(x).
\]
\end{lemma}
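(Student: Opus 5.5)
We would prove the potential bound first, by a comparison (maximum) principle, and then read off the density bound from the Euler equations. Set
\[
W=\Phi_1+\Phi_2-\Phi^{\mathrm{TF}}=U^{\mathrm{TF}}-U_1^{\mathrm{TF}}-U_2^{\mathrm{TF}},
\]
where the second expression holds because the nuclear terms \(Z_i|x-R_i|^{-1}\) cancel exactly. This has two consequences. First, \(W\) has no Coulomb singularity at \(R_1,R_2\). Second, since \(\rho^{\mathrm{TF}},\rho_i^{\mathrm{TF}}\in L^1\cap L^{5/3}\), each Coulomb potential is continuous on all of \(\mathbb R^3\), since \(|x|^{-1}\in L^{5/2}_{\mathrm{loc}}\) is the Hölder dual exponent of \(5/3\). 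Hence \(W\) is continuous on \(\mathbb R^3\).

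\emph{Behaviour at infinity.} We would show \(W(x)\to0\) as \(|x|\to\infty\). Each of the three potentials is the Coulomb potential of an \(L^1\cap L^{5/3}\) density. Such a potential is asymptotic to (total mass)\(/|x|\), and in fact tends to \(0\). So \(W\to0\), and this does not even use neutrality.

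\emph{Sign of the Laplacian.} In the sense of distributions,
\[
-\Delta W=4\pi\bigl(\rho^{\mathrm{TF}}-\rho_1^{\mathrm{TF}}-\rho_2^{\mathrm{TF}}\bigr)
=4\pi\gamma\bigl((\Phi^{\mathrm{TF}})_+^{3/2}-\Phi_1^{3/2}-\Phi_2^{3/2}\bigr),
\]
by \eqref{eq:tf-euler} and the single-center relations \(\rho_i^{\mathrm{TF}}=\gamma\Phi_i^{3/2}\). Consider the open set \(\Omega=\{W<0\}\). On \(\Omega\) we have \(\Phi^{\mathrm{TF}}>\Phi_1+\Phi_2>0\). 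Superadditivity of \(t\mapsto t^{3/2}\) then gives
\[
(\Phi^{\mathrm{TF}})^{3/2}>(\Phi_1+\Phi_2)^{3/2}\ge\Phi_1^{3/2}+\Phi_2^{3/2}.
\]
Thus \(-\Delta W\ge0\) on \(\Omega\), i.e.\ \(W\) is superharmonic there.

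\emph{Minimum principle.} Suppose \(\Omega\neq\emptyset\). The function \(W\) is continuous, vanishes on \(\partial\Omega\), and tends to \(0\) at infinity. Its negative infimum over \(\Omega\) would then be attained at an interior point of \(\Omega\). The minimum principle for superharmonic functions (via continuity and mean-value inequalities) forces \(W\) to be constant on the corresponding component. This contradicts the boundary and infinity values. Hence \(W\ge0\), i.e.\ \(\Phi^{\mathrm{TF}}\le\Phi_1+\Phi_2\).

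The lower bound \(0\le\Phi^{\mathrm{TF}}\) is the nonnegativity of the neutral screened potential, as recorded from \cite{LiebSimon1977} above. The density bound follows by combining \eqref{eq:tf-euler}, the monotonicity of \(t\mapsto t^{3/2}\), and the convexity inequality \((a+b)^{3/2}\le\sqrt2\,(a^{3/2}+b^{3/2})\) for \(a,b\ge0\):
\[
\rho^{\mathrm{TF}}=\gamma(\Phi^{\mathrm{TF}})^{3/2}\le\gamma(\Phi_1+\Phi_2)^{3/2}\le\sqrt2\,\gamma\bigl(\Phi_1^{3/2}+\Phi_2^{3/2}\bigr)=\sqrt2\,(\rho_1^{\mathrm{TF}}+\rho_2^{\mathrm{TF}}).
\]
This holds at every \(x\neq R_1,R_2\), using the smoothness away from the nuclei cited above to pass from a.e.\ to pointwise.

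The main obstacle is making the maximum principle rigorous. This means verifying that \(W\) is genuinely continuous across the nuclei, with the singularities cancelling exactly, and handling superharmonicity only in the distributional sense on \(\Omega\). To deal with this we would mollify, or argue directly with the weak form by testing against \((W)_-\): integrating by parts gives \(\int|\nabla W_-|^2\le0\) on \(\Omega\), using the decay at infinity to discard boundary terms. This forces \(W_-\equiv0\).
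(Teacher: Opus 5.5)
Your proof is correct and follows essentially the same route as the paper: a maximum-principle argument on the set where $\Phi^{\mathrm{TF}}>\Phi_1+\Phi_2$, where the Euler equations give a strict sign for the Laplacian, followed by the $\sqrt2$ convexity step for the density. The one difference is technical: writing $W=U^{\mathrm{TF}}-U_1^{\mathrm{TF}}-U_2^{\mathrm{TF}}$ makes $W$ continuous across the nuclei with a distributional Laplacian free of point masses, which replaces the paper's auxiliary barrier $\varepsilon(|x-R_1|^{-1}+|x-R_2|^{-1})$ and its pointwise second-derivative test at an interior maximum.
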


\begin{proof}
Lieb and Simon proved that \(\Phi^{\mathrm{TF}}\ge0\)
in~\cite[Theorem~IV.2(a)]{LiebSimon1977}.
Write \(w=\Phi^{\mathrm{TF}}-\Phi_1-\Phi_2\) and set
    \[
        \mathcal B=\{x\in\mathbb R^3\setminus\{R_1,R_2\}:w(x)>0\}.
    \]
Since \(w\) is continuous, \(\mathcal B\) is open.
Assume that \(\mathcal B\) is nonempty and fix \(x_0\in \mathcal B\). Then
\[
\Phi^{\mathrm{TF}}(x_0)>\Phi_1(x_0)+\Phi_2(x_0).
\]

Since \(\Phi^{\mathrm{TF}}(x_0)>\Phi_1(x_0)+\Phi_2(x_0)>0\), the Euler
equation gives at \(x_0\)
\[
    \begin{aligned}
        \rho^{\mathrm{TF}}(x_0)^{2/3}
&=
\frac{3}{5c_{\mathrm{TF}}}\Phi^{\mathrm{TF}}(x_0)
\\
&>
\frac{3}{5c_{\mathrm{TF}}}
\bigl(\Phi_1(x_0)+\Phi_2(x_0)\bigr)
\\
&=
\rho_1^{\mathrm{TF}}(x_0)^{2/3}
+
\rho_2^{\mathrm{TF}}(x_0)^{2/3}.
    \end{aligned}
\]
Since \(t\mapsto t^{3/2}\) is strictly increasing on \([0,\infty)\),
\[
\begin{aligned}
    \rho^{\mathrm{TF}}(x_0)
&>
\left(
\rho_1^{\mathrm{TF}}(x_0)^{2/3}
+
\rho_2^{\mathrm{TF}}(x_0)^{2/3}
\right)^{3/2}
\\
&>
\rho_1^{\mathrm{TF}}(x_0)
+
\rho_2^{\mathrm{TF}}(x_0).
\end{aligned}
\]
As \(x_0\in \mathcal B\) is arbitrary,
\[
\rho^{\mathrm{TF}}
>
\rho_1^{\mathrm{TF}}
+
\rho_2^{\mathrm{TF}}
\qquad\text{in } \mathcal B.
\]

Direct computation yields
\[
\Delta\Phi^{\mathrm{TF}}
=
4\pi\rho^{\mathrm{TF}}
\qquad
\Delta\Phi_i
=
4\pi\rho_i^{\mathrm{TF}}
\qquad \text{in } \mathbb{R}^3\setminus\{R_1, R_2\},
\]
which gives
\[
\Delta w
=
4\pi
\bigl(
\rho^{\mathrm{TF}}
-
\rho_1^{\mathrm{TF}}
-
\rho_2^{\mathrm{TF}}
\bigr).
\]
Hence,
\begin{equation}\label{eq:w-strict-subharmonic}
    \Delta w>0
\qquad\text{in }\mathcal B
\end{equation}

Near \(R_i\), we have
\[
\Phi^{\mathrm{TF}}(x)=\frac{Z_i}{|x-R_i|}+O(1),
\qquad
\Phi_i(x)=\frac{Z_i}{|x-R_i|}+O(1),
\]
since \(\rho^{\mathrm{TF}}\) and \(\rho_i^{\mathrm{TF}}\) are integrable near
each
nucleus. For \(j\neq i\), the potential \(\Phi_j\) is bounded near \(R_i\). Thus
\(w\) remains bounded near \(R_i\).
Moreover,
\[
w(x)\to0
\qquad\text{as }|x|\to\infty.
\]

Choose \(\varepsilon>0\) small enough so that \(w_\varepsilon(x_0)>0\), where
\[
w_\varepsilon(x)=w(x)-\varepsilon
\left(|x-R_1|^{-1}+|x-R_2|^{-1}\right).
\]
The boundedness of \(w\) near each nucleus implies that
\(w_\varepsilon\to-\infty\) there, while
\(w_\varepsilon\to0\) at infinity. Hence \(w_\varepsilon\) attains a
positive maximum at a point \(x_*\) away from the nuclei.
At such a point \(w(x_*)>0\), and the subtracted term is harmonic, so
\(\Delta w_\varepsilon(x_*)=\Delta w(x_*)\).
Then~\eqref{eq:w-strict-subharmonic} gives
\[
0\ge\Delta w_\varepsilon(x_*)=\Delta w(x_*)>0.
\]
This is a contradiction. Hence
\(\mathcal B=\varnothing\), and therefore
\[
0\leq
\Phi^{\mathrm{TF}}
\leq
\Phi_1+\Phi_2.
\]
Then we have
\[
    \begin{aligned}
        \rho^{\mathrm{TF}}(x)^{2/3}
&=
\frac{3}{5c_{\mathrm{TF}}}\Phi^{\mathrm{TF}}(x)
\\
&\leq
\frac{3}{5c_{\mathrm{TF}}}
\bigl(
\Phi_1(x)+\Phi_2(x)
\bigr)
\\
&=
\rho_1^{\mathrm{TF}}(x)^{2/3}
+
\rho_2^{\mathrm{TF}}(x)^{2/3}.
    \end{aligned}
\]

Thus
\begin{equation}
\rho^{\mathrm{TF}}(x)
\leq
\left(
\rho_1^{\mathrm{TF}}(x)^{2/3}
+
\rho_2^{\mathrm{TF}}(x)^{2/3}
\right)^{3/2}.
\label{eq:density-intermediate}
\end{equation}
Since \((a+b)^{3/2}\le\sqrt2\,(a^{3/2}+b^{3/2})\) for \(a,b\ge0\), we obtain
\[
\rho^{\mathrm{TF}}(x)
\leq
\sqrt2\,
\bigl(
\rho_1^{\mathrm{TF}}(x)
+
\rho_2^{\mathrm{TF}}(x)
\bigr)
\]
for every
\(x\in\mathbb R^3\setminus\{R_1,R_2\}\).
\end{proof}

We record the following decay estimates for the neutral TF density and
potential.

\begin{corollary}\label{cor:tf-tail}
There exists \(C>0\) such that the following hold for every
\(x\in\mathbb R^3\setminus\{R_1,R_2\}\):
\[
\begin{aligned}
    0\leq \rho^{\mathrm{TF}}(x)
\leq&
C\left(
|x-R_1|^{-6}
+
|x-R_2|^{-6}
\right),
\\
0\leq\Phi^{\mathrm{TF}}(x)
\leq&
C\left(
|x-R_1|^{-4}
+
|x-R_2|^{-4}
\right).
\end{aligned}
\]
Moreover, for every \(r>0\) and \(0<\lambda\leq1\),
\[
\int_{\{\delta\geq r\}}
\rho^{\mathrm{TF}}
\,dx\leq
Cr^{-3},
\]
and
\[
\int_{K_{(1+\lambda)r}\setminus K_r}
\rho^{\mathrm{TF}}
\,dx\leq
C\lambda r^{-3},
\]
and
\[
    0
\leq
\delta(x)\Phi_r^{\mathrm{TF}}(x)
\leq
Cr^{-3},
\qquad
\delta(x)\geq r.
\]
\end{corollary}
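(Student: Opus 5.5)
Everything here follows from Lemma~\ref{lem:tf-pointwise-tail} combined with the one-center bounds
\[
\rho_i^{\mathrm{TF}}=\gamma\Phi_i^{3/2},\qquad 0<\Phi_i\le A|x-R_i|^{-4}.
\]

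\emph{Pointwise bounds.} The first two bounds are immediate. From Lemma~\ref{lem:tf-pointwise-tail},
\[
\Phi^{\mathrm{TF}}\le\Phi_1+\Phi_2\le A\bigl(|x-R_1|^{-4}+|x-R_2|^{-4}\bigr).
\]
Likewise,
\[
\rho^{\mathrm{TF}}\le\sqrt2(\rho_1^{\mathrm{TF}}+\rho_2^{\mathrm{TF}})\le\sqrt2\gamma A^{3/2}\bigl(|x-R_1|^{-6}+|x-R_2|^{-6}\bigr).
\]

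\emph{Integral bounds.} On $\{\delta\ge r\}$ we have $|x-R_i|\ge r$ for both $i$. Hence the exterior mass is at most
\[
C\sum_i\int_{|x-R_i|\ge r}|x-R_i|^{-6}\,dx=C\,r^{-3}.
\]
For the shell, $K_{(1+\lambda)r}\setminus K_r\subset\{r<\delta\le(1+\lambda)r\}$. This set lies in the union over $i$ of the annuli $\{r<|x-R_i|\le(1+\lambda)r\}$, and on it $|x-R_j|>r$ for both $j$. So $\rho^{\mathrm{TF}}\le Cr^{-6}$ there. Each annulus has volume at most $\frac{4\pi}{3}r^3\bigl((1+\lambda)^3-1\bigr)\le 28\pi\lambda r^3/3$, using $\lambda\le1$. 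The shell mass is therefore at most $C\lambda r^{-3}$.

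\emph{Bound on $\delta\Phi_r^{\mathrm{TF}}$.} This is the only step needing an idea: one must pass from $\Phi^{\mathrm{TF}}$ to $\Phi_r^{\mathrm{TF}}$. Write
\[
\Phi_r^{\mathrm{TF}}=V-U_r^{\mathrm{TF}}=\Phi^{\mathrm{TF}}+\int_{\{\delta>r\}}\frac{\rho^{\mathrm{TF}}(y)}{|x-y|}\,dy.
\]
Both pieces are nonnegative, so the lower bound $\Phi_r^{\mathrm{TF}}\ge0$ is clear (using $\Phi^{\mathrm{TF}}\ge0$).

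For the upper bound with $\delta(x)\ge r$, the first piece gives $\delta\,\Phi^{\mathrm{TF}}\le 2C\delta^{-3}\le Cr^{-3}$.

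For the second piece, I would use the Newton-type bound
\[
\delta(x)\int_{\{\delta>r\}}\frac{\rho^{\mathrm{TF}}(y)}{|x-y|}\,dy\le C r^{-3}.
\]
To prove it, split the $y$-integral according to whether $|x-y|\ge\delta(x)/2$ or not.
\begin{itemize}
\item[(a)] Where $|x-y|\ge\delta(x)/2$, the contribution is at most $2\delta(x)^{-1}\int_{\{\delta>r\}}\rho^{\mathrm{TF}}\le C\delta(x)^{-1}r^{-3}$.
\item[(b)] Where $|x-y|<\delta(x)/2$, the point $y$ satisfies $|y-R_i|\ge\delta(x)/2$ for both $i$. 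Hence $\rho^{\mathrm{TF}}(y)\le C\delta(x)^{-6}$, and the contribution is at most $C\delta(x)^{-6}\int_{|z|<\delta/2}|z|^{-1}\,dz\le C\delta(x)^{-4}$.
\end{itemize}
Multiplying by $\delta(x)$ gives at most $Cr^{-3}+C\delta(x)^{-3}\le Cr^{-3}$, since $\delta(x)\ge r$.

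All constants depend only on $A$ and $\gamma$, hence only on $c_{\mathrm{TF}}$. They are uniform in $R_1,R_2,Z_1,Z_2$ because the single-center bound $\Phi_i\le A|x-R_i|^{-4}$ is charge-independent.
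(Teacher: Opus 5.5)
Your proof is correct. The pointwise bounds, the exterior mass bound, the shell bound, and the decomposition $\Phi_r^{\mathrm{TF}}=\Phi^{\mathrm{TF}}+\int_{\{\delta>r\}}\rho^{\mathrm{TF}}(y)|x-y|^{-1}\,dy$ into two nonnegative pieces all match the paper's argument.

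The only real difference is how you bound $\delta(x)$ times the exterior potential. The paper dominates $e_r^{\mathrm{TF}}$ by the radial majorants $C\mathbf 1_{\{|y-R_i|>r\}}|y-R_i|^{-6}$. It then computes their potentials exactly with Newton's theorem, obtaining at most $\frac{4\pi}{3t_i}r^{-3}$, where $t_i=|x-R_i|\ge\delta(x)$. You instead split at $|x-y|=\delta(x)/2$. For the far part you use the already-proved mass bound $\int_{\{\delta>r\}}\rho^{\mathrm{TF}}\le Cr^{-3}$. For the near part you use the pointwise bound $\rho^{\mathrm{TF}}\le C\delta^{-6}$ together with $|y-R_i|\ge\delta(x)/2$.

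The paper's route is an exact one-line computation with an explicit constant, but it depends on the majorant being a sum of radial profiles. Yours uses only the integrated tail bound and the pointwise decay, not their radial structure, so it would carry over to any density with those two properties. The price is an extra term $C\delta(x)^{-3}$, which is harmless because $\delta(x)\ge r$.
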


\begin{proof}
The pointwise bounds follow from Lemma~\ref{lem:tf-pointwise-tail}
and the atomic Sommerfeld bound. Since \(\delta(x)\ge r\) implies
\(|x-R_i|\ge r\) for both nuclei, integration gives
\[
\int_{\{\delta\ge r\}}\rho^{\mathrm{TF}}
 \,dx\le C\sum_{i=1}^2\int_{|x-R_i|\ge r}|x\,dx-R_i|^{-6}\,dx
 \le Cr^{-3}.
\]
The shell \(K_{(1+\lambda)r}\setminus K_r\) is contained in the union
of the two spherical annuli with radii \(r\) and \((1+\lambda)r\).
Its volume is at most \(C\lambda r^3\), and
\(\rho^{\mathrm{TF}}\le Cr^{-6}\) there. This proves the shell bound.

For the screened potential, put
\[
U_{\geq r}^{\mathrm{TF}}(x)
=\int_{\{\delta>r\}}\frac{\rho^{\mathrm{TF}}(y)}{|x-y|}\,dy .
\]
Then
\[
\Phi_r^{\mathrm{TF}}
 =\Phi^{\mathrm{TF}}+U_{\geq r}^{\mathrm{TF}},\qquad
0\le e_r^{\mathrm{TF}}(y)
 \le C\sum_{i=1}^2
 \mathbf1_{\{|y-R_i|>r\}}|y-R_i|^{-6}.
\]
For \(t_i=|x-R_i|\ge r\), Newton's theorem applied to each radial
majorant yields
\[
\begin{aligned}
\int_{|y-R_i|>r}\frac{|y-R_i|^{-6}}{|x-y|}\,dy
 &=4\pi\left(t_i^{-1}\int_r^{t_i}s^{-4}\,ds
                   +\int_{t_i}^{\infty}s^{-5}\,ds\right)\\
 &\le\frac{4\pi}{t_i}\int_r^\infty s^{-4}\,ds
 =\frac{4\pi}{3t_i}r^{-3}.
\end{aligned}
\]
Because \(\delta(x)\le t_i\), we obtain
\(\delta(x)U_{\geq r}^{\mathrm{TF}}(x)\le Cr^{-3}\).
The pointwise potential bound also gives
\(\delta(x)\Phi^{\mathrm{TF}}(x)
\le C\sum_i t_i^{-3}\le Cr^{-3}\).
Both terms are nonnegative, proving the last assertion.
\end{proof}

\subsection{Comparison with the TF minimizer}

We first smooth the Thomas--Fermi density at scale \(s\) to obtain a trial
density with \(\int_{\mathbb R^3}\sigma_s\,dx=Z\).

\begin{lemma}\label{lem:tf-smoothing}
For every \(s>0\), there exists a smooth nonnegative density \(\sigma_s\)
satisfying
\[
\begin{aligned}
    \int_{\mathbb R^3}\sigma_s\,dx&=Z,
    &\qquad
    \int_{\mathbb R^3}\sigma_s^{5/3}\,dx&\leq\int_{\mathbb R^3}(\rho^{\mathrm{TF}})^{5/3}\,dx,
    \\
    \mathcal{D}(\sigma_s,\sigma_s)&\leq\mathcal{D}(\rho^{\mathrm{TF}},\rho^{\mathrm{TF}}),
    &\qquad
    \int_{\mathbb R^3}|\nabla\sqrt{\sigma_s}|^2\,dx&\leq Cs^{-2}Z,
\end{aligned}
\]
and
\[
    0\leq\int_{\mathbb R^3}V(\rho^{\mathrm{TF}}-\sigma_s)\,dx\leq CZ^{12/5}s^{1/5}.
\]
\end{lemma}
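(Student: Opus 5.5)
Take \(\sigma_s=\rho^{\mathrm{TF}}*g_s\), where \(g_s(x)=s^{-3}g(x/s)\) and \(g\) is a fixed smooth, radial, nonnegative, radially nonincreasing mollifier with \(\int g=1\) and support in the unit ball. A natural concrete choice is \(g=\eta^2\) for a smooth radial \(\eta\) with \(\int\eta^2=1\). The main properties then follow from standard convexity and positivity facts.

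\textbf{Mass and smoothness.} Since \(\int g_s=1\), we get \(\int\sigma_s\,dx=\int\rho^{\mathrm{TF}}\,dx=Z\). Smoothness and nonnegativity are clear.

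\textbf{TF term.} Jensen's inequality for the probability measure \(g_s(x-y)\,dy\), applied to the convex map \(t\mapsto t^{5/3}\), gives \(\sigma_s^{5/3}\le(\rho^{\mathrm{TF}})^{5/3}*g_s\). Integrating yields \(\int\sigma_s^{5/3}\le\int(\rho^{\mathrm{TF}})^{5/3}\).

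\textbf{Coulomb term.} In Fourier variables,
\[
\mathcal D(\sigma_s,\sigma_s)=C\int|\hat\rho^{\mathrm{TF}}|^2|\hat g(s\xi)|^2|\xi|^{-2}\,d\xi,
\]
and \(|\hat g|\le\hat g(0)=1\) because \(g\ge0\) and \(\int g=1\). Hence \(\mathcal D(\sigma_s,\sigma_s)\le\mathcal D(\rho^{\mathrm{TF}},\rho^{\mathrm{TF}})\).

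\textbf{von Weizsäcker term.} Write \(\sqrt{\sigma_s}=\|\sqrt{\rho^{\mathrm{TF}}(x-\cdot)}\,\eta_s\|_{L^2}\) with \(\eta_s=\sqrt{g_s}\). Convexity of \(\rho\mapsto\int|\nabla\sqrt\rho|^2\), or directly Cauchy–Schwarz in the form \(|\nabla\sigma_s|^2\le 4\sigma_s\int\rho^{\mathrm{TF}}(y)|\nabla\eta_s(x-y)|^2\,dy\) obtained by moving the derivative onto \(\eta_s^2\), gives
\[
\int|\nabla\sqrt{\sigma_s}|^2\le Z\,\|\nabla\eta_s\|_2^2=CZs^{-2}.
\]
This avoids any regularity requirement on \(\rho^{\mathrm{TF}}\) near the nuclei.

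\textbf{Attraction term.} This step needs the most care. We have
\[
\int V(\rho^{\mathrm{TF}}-\sigma_s)=\int\rho^{\mathrm{TF}}\,(V-V*g_s).
\]
The potential \(|x-R_i|^{-1}\) is superharmonic, so its average over a radial nonincreasing mollifier is no larger than the point value. This is Newton's theorem: \(|\cdot|^{-1}*g_s\le|\cdot|^{-1}\) pointwise. Hence the difference is nonnegative.

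For the upper bound, \(V-V*g_s\) vanishes outside \(\{\delta\le s\}\), by the mean value property for harmonic functions. Inside, it is bounded by \(\sum_i Z_i|x-R_i|^{-1}\mathbf 1_{\{|x-R_i|\le s\}}\). Hölder's inequality with exponents \(5/3\) and \(5/2\) gives
\[
\int\rho^{\mathrm{TF}}Z_i|x-R_i|^{-1}\mathbf 1_{\{|x-R_i|\le s\}}\le Z_i\|\rho^{\mathrm{TF}}\|_{5/3}\,\bigl\||x|^{-1}\mathbf 1_{B_s}\bigr\|_{5/2}=CZ_i\|\rho^{\mathrm{TF}}\|_{5/3}\,s^{1/5}.
\]
By \eqref{eq:tf-standard-estimate}, \(\|\rho^{\mathrm{TF}}\|_{5/3}\le CZ^{7/5}\). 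Summing over \(i\) gives the bound \(CZ^{12/5}s^{1/5}\), uniformly in the nuclear positions.

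The only subtle points are the radial-monotone mollifier, which is needed for Newton's theorem and the compact-support vanishing, and handling the von Weizsäcker term through the product structure \(g=\eta^2\) rather than through derivatives of \(\rho^{\mathrm{TF}}\).
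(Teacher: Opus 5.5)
Your proposal is correct and follows the paper's proof essentially step by step. Both use a mollifier that is the square of a smooth radial bump, Jensen for the \(L^{5/3}\) term, \(|\hat g|\le 1\) for the Coulomb term, and the Cauchy--Schwarz bound \(|\nabla\sigma_s|^2\le 4\sigma_s\,\rho^{\mathrm{TF}}*|\nabla\sqrt{g_s}|^2\) for the von Weizs\"acker term. For the attraction, both use Newton's theorem together with H\"older and \(\||x|^{-1}\mathbf 1_{B_s}\|_{5/2}\le Cs^{1/5}\). The only superfluous ingredient is radial monotonicity of the mollifier. Newton's theorem already gives \(0\le |x|^{-1}-(|\cdot|^{-1}*g_s)(x)\le |x|^{-1}\mathbf 1_{\{|x|<s\}}\) for any radial probability density supported in \(B_s\).
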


\begin{proof}
Choose a nonnegative radial \(g\in C_c^\infty(B_1)\) with
\(\int_{\mathbb R^3} g^2\,dx=1\), and set \(\eta=g^2\). Thus \(\eta\) is a smooth
radial probability kernel with smooth square root and support in
\(B_1\). Set
\[
\eta_s(x)=s^{-3}\eta(x/s),
\qquad
\sigma_s=\rho^{\mathrm{TF}}*\eta_s.
\]
Then we define
\[
    g_s=\sqrt{\eta_s}, \qquad
    Q_s(x) = \int_{\mathbb{R}^3} \rho^{\mathrm{TF}}(y) |\nabla g_s(x-y)|^2 \, dy.
\]
Tonelli's theorem gives
\[
    \int_{\mathbb R^3}\sigma_s\,dx=Z,
\]
and Jensen's
inequality gives
\[
\int_{\mathbb R^3}\sigma_s^{5/3}\,dx\leq\int_{\mathbb R^3} (\rho^{\mathrm{TF}})^{5/3}\,dx.
\]
Since \(|\widehat{\eta_s}|\leq1\), the Fourier representation of
the Coulomb energy gives
\[
\mathcal{D}(\sigma_s,\sigma_s)\leq\mathcal{D}(\rho^{\mathrm{TF}},\rho^{\mathrm{TF}}).
\]

Applying weighted Cauchy--Schwarz to the derivatives
of the convolution yields
\[
    \begin{aligned}
|\nabla \sigma_s(x)|^2 &\le 4 \left( \int_{\mathbb{R}^3} \rho^{\mathrm{TF}}(y) g_s(x-y)^2 \, dy \right) \left( \int_{\mathbb{R}^3} \rho^{\mathrm{TF}}(y) |\nabla g_s(x-y)|^2 \, dy \right) \\
&= 4 \sigma_s(x) Q_s(x),
\end{aligned}
\]
and the translation invariance gives
\[
    \begin{aligned}
\int_{\mathbb{R}^3} Q_s(x) \, dx
&= \int_{\mathbb{R}^3} \rho^{\mathrm{TF}}(y)
\left( \int_{\mathbb{R}^3} |\nabla g_s(x-y)|^2 \, dx \right) dy \\
&= \left( \int_{\mathbb R^3} \rho^{\mathrm{TF}}(y) \, dy \right) \left( \int_{\mathbb R^3} |\nabla g_s(z)|^2 \, dz \right)
\\
&= Z \|\nabla g_s\|_2^2 \\
&= Z s^{-2} \|\nabla g\|_2^2.
\end{aligned}
\]
Hence we have
\[
|\nabla\sigma_s|^2
\leq
4\sigma_s Q_s,
\qquad
\int_{\mathbb R^3} Q_s\,dx\leq Cs^{-2}Z.
\]
Indeed, the functions
\(u_\varepsilon=\sqrt{\sigma_s+\varepsilon}-\sqrt\varepsilon\)
satisfy \(u_\varepsilon^2\le\sigma_s\) and
\(|\nabla u_\varepsilon|^2\le Q_s\). They converge to
\(\sqrt{\sigma_s}\) in \(L^2\) by dominated convergence.
Weak compactness in \(H^1\) and lower semicontinuity therefore give
\[
\sqrt{\sigma_s}\in H^1(\mathbb R^3),
\qquad
\int_{\mathbb R^3}|\nabla\sqrt{\sigma_s}|^2
\,dx\leq
Cs^{-2}Z.
\]

Finally, we define
\[
\kappa_s
=
\frac{1}{|\cdot|}
-
\frac{1}{|\cdot|}*\eta_s .
\]
Since \(\sigma_s=\rho^{\mathrm{TF}}*\eta_s\), Tonelli's theorem and the radial
symmetry of \(\eta_s\) give
\[
\begin{aligned}
\int_{\mathbb R^3}
V(\rho^{\mathrm{TF}}-\sigma_s)
\,dx&=
\int_{\mathbb R^3}
\rho^{\mathrm{TF}}(x)
\left(
V(x)-V*\eta_s(x)
\right)\,dx\\
&=
\int_{\mathbb R^3}
\rho^{\mathrm{TF}}(x)
\left(
Z_1\kappa_s(x-R_1)+Z_2\kappa_s(x-R_2)
\right)dx \,dx.
\end{aligned}
\]
Newton's theorem and \(\operatorname{supp}(\eta_s)\subset B_s\) give
\[
0\le \kappa_s(x)\le |x|^{-1}\mathbf1_{\{|x|<s\}}.
\]
Consequently,
\[
\|\kappa_s\|_{L^{5/2}}
 \le\left(4\pi\int_0^s t^{-1/2}\,dt\right)^{2/5}
 \le Cs^{1/5}.
\]
H\"older's inequality now gives
\[
\begin{aligned}
0
\le
\int_{\mathbb R^3}
V(\rho^{\mathrm{TF}}-\sigma_s)
\,dx&\le
CZ
\|\rho^{\mathrm{TF}}\|_{L^{5/3}}
\|\kappa_s\|_{L^{5/2}}
\\
&\leq
CZ
\|\rho^{\mathrm{TF}}\|_{L^{5/3}}
s^{1/5}
\\
&\leq
CZ^{12/5}s^{1/5}.
\end{aligned}
\]
The last inequality follows from~\eqref{eq:tf-standard-estimate}.
\end{proof}

\begin{proposition}
\label{prop:tf-comparison}
Assume \(N\ge Z\ge 2\). Then
\[
\mathcal{D}(\rho_0-\rho^{\mathrm{TF}},\rho_0-\rho^{\mathrm{TF}})
\le C Z^{25/11},
\]
and
\[
\|\rho_0-\rho^{\mathrm{TF}}\|_{L^{5/3}(\mathbb R^3)}
\le C Z^{7/5}.
\]
\end{proposition}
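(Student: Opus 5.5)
We compare the energies of \(\rho_0\) and of the smoothed trial density \(\sigma_s\) from Lemma~\ref{lem:tf-smoothing}, and then apply Lemma~\ref{lem:tf-energy-gap}.

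\textbf{Step 1 (lower bound on the minimizer's energy).} The energy of \(\rho_0\) is split as
\[
\mathcal{E}_V^{\mathrm{TFDW}}(\rho_0)=\mathcal{E}_V^{\mathrm{TF}}(\rho_0)+c_{\mathrm W}\|\nabla\sqrt{\rho_0}\|_2^2-c_{\mathrm D}\int\rho_0^{4/3}.
\]
By H\"older, \(\int\rho_0^{4/3}\le N^{1/2}\|\rho_0\|_{5/3}^{5/6}\). From Proposition~\ref{prop:hartree-apriori} and the rough bound~\eqref{eq:rough-mass-detailed}, we have \(N\le CZ\) and \(\int\rho_0^{5/3}\le CZ^{7/3}\). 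Hence the Dirac term is at most \(CZ^{1/2}Z^{7/6}=CZ^{5/3}\), which gives
\[
\mathcal{E}_V^{\mathrm{TF}}(\rho_0)\le E_V^{\mathrm{TFDW}}(N)+CZ^{5/3}.
\]

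\textbf{Step 2 (upper bound via a trial state).} We need \(E_V^{\mathrm{TFDW}}(N)\le E_V^{\mathrm{TFDW}}(Z)\) for \(N\ge Z\). This follows from the binding inequality~\eqref{eq:binding-inequality} with \(M=Z\), since \(E_0^{\mathrm{TFDW}}(N-Z)\le 0\).

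We then use \(\sigma_s\) as a trial density for \(E_V^{\mathrm{TFDW}}(Z)\). Dropping the negative Dirac term, Lemma~\ref{lem:tf-smoothing} gives
\[
\mathcal{E}_V^{\mathrm{TFDW}}(\sigma_s)\le \mathcal{E}_V^{\mathrm{TF}}(\rho^{\mathrm{TF}})+CZ^{12/5}s^{1/5}+Cs^{-2}Z.
\]

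Combining this with Step 1 and Lemma~\ref{lem:tf-energy-gap} (applicable since \(\rho_0\in L^1\cap L^{5/3}\)) yields
\[
\mathcal{D}(\rho_0-\rho^{\mathrm{TF}},\rho_0-\rho^{\mathrm{TF}})\le C\bigl(Z^{12/5}s^{1/5}+Zs^{-2}+Z^{5/3}\bigr).
\]
Balancing the first two terms gives \(s^{11/5}=Z^{-7/5}\), i.e.\ \(s=Z^{-7/11}\). Each of the two terms is then \(Z^{1+14/11}=Z^{25/11}\). Since \(25/11>5/3\), the first claim follows.

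The main obstacle is exactly this balancing: the smoothing error is far larger than the \(Z^{5/3}\) Dirac correction, and this is what forces the exponent \(25/11\) rather than \(7/3-\)something better. One should also confirm that the Dirac bound really uses \(N\le CZ\), which comes from the rough bound.

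\textbf{Step 3 (the \(L^{5/3}\) bound).} This is elementary: by Minkowski's inequality,
\[
\|\rho_0-\rho^{\mathrm{TF}}\|_{5/3}\le\|\rho_0\|_{5/3}+\|\rho^{\mathrm{TF}}\|_{5/3}.
\]
Both terms are bounded by \(C(Z^{7/3})^{3/5}=CZ^{7/5}\), using Proposition~\ref{prop:hartree-apriori} together with \(N\le CZ\), and~\eqref{eq:tf-standard-estimate}.
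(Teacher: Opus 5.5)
Your proposal is correct and follows the paper's proof essentially step for step. The common chain is: the trial density $\sigma_s$ bounds $E_V^{\mathrm{TFDW}}(Z)$, monotonicity gives $E_V^{\mathrm{TFDW}}(N)\le E_V^{\mathrm{TFDW}}(Z)$, the Dirac term is at most $CZ^{5/3}$ via Cauchy--Schwarz with $N\le CZ$, and Lemma~\ref{lem:tf-energy-gap} plus Minkowski give the two bounds. The only cosmetic difference is that you choose $s=Z^{-7/11}$ after collecting all error terms, whereas the paper fixes $s$ before combining.
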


\begin{proof}
Let \(\sigma_s\) be the density from Lemma~\ref{lem:tf-smoothing}. Since
\(\sqrt{\sigma_s}\in H^1(\mathbb R^3)\) and \(\int_{\mathbb R^3}\sigma_s\,dx=Z\),
it is an admissible trial density for the variational problem with
\(\int_{\mathbb R^3}\sigma\,dx=Z\), so
\[
E_V^{\mathrm{TFDW}}(Z)
\le \mathcal{E}_V^{\mathrm{TFDW}}(\sigma_s).
\]
Then we have
\[
E_V^{\mathrm{TFDW}}(Z)
\le
\mathcal{E}_V^{\mathrm{TF}}(\sigma_s)
+
C Zs^{-2}.
\]

Since
\[
\int_{\mathbb R^3}\sigma_s^{5/3}
\,dx\le
\int_{\mathbb R^3}(\rho^{\mathrm{TF}})^{5/3}\,dx,
\]
and
\[
\mathcal{D}(\sigma_s,\sigma_s)
\le
\mathcal{D}(\rho^{\mathrm{TF}},\rho^{\mathrm{TF}}),
\]
we have
\[
\begin{aligned}
\mathcal{E}_V^{\mathrm{TF}}(\sigma_s)
-
\mathcal{E}_V^{\mathrm{TF}}(\rho^{\mathrm{TF}})
&\le
\int_{\mathbb R^3}
V(\rho^{\mathrm{TF}}-\sigma_s)
\,dx\\
&\le
C Z^{12/5}s^{1/5}.
\end{aligned}
\]
Consequently,
\[
E_V^{\mathrm{TFDW}}(Z)
-
\mathcal{E}_V^{\mathrm{TF}}(\rho^{\mathrm{TF}})
\le
C\left(
Zs^{-2}
+
Z^{12/5}s^{1/5}
\right).
\]

Taking \(s=Z^{-7/11}\), we obtain
\[
E_V^{\mathrm{TFDW}}(Z)
-
\mathcal{E}_V^{\mathrm{TF}}(\rho^{\mathrm{TF}})
\le
C Z^{25/11}.
\]

Since \(N\ge Z\), the monotonicity in the electron number proved
after~\eqref{eq:binding-inequality} gives
\[
E_V^{\mathrm{TFDW}}(N)
\le
E_V^{\mathrm{TFDW}}(Z).
\]
Namely,
\[
\mathcal{E}_V^{\mathrm{TFDW}}(\rho_0)
-
\mathcal{E}_V^{\mathrm{TF}}(\rho^{\mathrm{TF}})
\le
C Z^{25/11}.
\]

The definition of the TFDW functional gives
\[
\mathcal{E}_V^{\mathrm{TF}}(\rho_0)
=
\mathcal{E}_V^{\mathrm{TFDW}}(\rho_0)
-
c_{\mathrm W}
\int_{\mathbb R^3}
|\nabla\sqrt{\rho_0}|^2
\,dx+
c_{\mathrm D}
\int_{\mathbb R^3}{\rho_0}^{4/3}\,dx,
\]
and hence
\[
\mathcal{E}_V^{\mathrm{TF}}(\rho_0)
\le
\mathcal{E}_V^{\mathrm{TFDW}}(\rho_0)
+
c_{\mathrm D}
\int_{\mathbb R^3}{\rho_0}^{4/3}\,dx.
\]
From Cauchy--Schwarz, we have
\[
\begin{aligned}
\int_{\mathbb R^3}{\rho_0}^{4/3}\,dx
&=\int_{\mathbb R^3}{\rho_0}^{1/2}{\rho_0}^{5/6}\,dx
\le N^{1/2}\left(\int_{\mathbb R^3}{\rho_0}^{5/3}\,dx\right)^{1/2}\\
&\le CZ^{5/3}\le CZ^{25/11}.
\end{aligned}
\]
Here we used the previous estimates
\[
N\le CZ,
\qquad
\int_{\mathbb R^3}{\rho_0}^{5/3}
\,dx\le CZ^{7/3}.
\]
Therefore, we have
\[
\mathcal{E}_V^{\mathrm{TF}}(\rho_0)
-
\mathcal{E}_V^{\mathrm{TF}}(\rho^{\mathrm{TF}})
\le
CZ^{25/11}.
\]

Applying Lemma~\ref{lem:tf-energy-gap}, we obtain
\[
\mathcal{D}(\rho_0-\rho^{\mathrm{TF}},\rho_0-\rho^{\mathrm{TF}})
\le
\mathcal{E}_V^{\mathrm{TF}}(\rho_0)
-
\mathcal{E}_V^{\mathrm{TF}}(\rho^{\mathrm{TF}})
\le
CZ^{25/11}.
\]

The previous a priori bound gives
\[
\int_{\mathbb R^3}{\rho_0}^{5/3}
\,dx\le
CZ^{7/3},
\]
while the standard Thomas--Fermi estimate~\cite{LiebSimon1977} gives
\[
\int_{\mathbb R^3}
(\rho^{\mathrm{TF}})^{5/3}
\,dx\le
CZ^{7/3}.
\]
Therefore,
\[
\|\rho_0-\rho^{\mathrm{TF}}\|_{L^{5/3}}
\le
\|{\rho_0}\|_{L^{5/3}}
+
\|\rho^{\mathrm{TF}}\|_{L^{5/3}}
\le
CZ^{7/5}.
\]
This completes the proof.
\end{proof}

\section{Screened Coulomb Potentials}
\label{sec:coulomb-truncation}

In this section we prove the interior estimate for the Coulomb potential. We
bound the potential of a density restricted to \(K_r\) pointwise in terms of
the Coulomb energy and the \(L^{5/3}\) norm of the density, uniformly in the
relative position of the two nuclei.

Here \(\dot H^1\) denotes the homogeneous Sobolev space, with norm
\(\|\nabla\cdot\|_2\), and \(\|\cdot\|_{(\dot H^1)^*}\) the norm of its dual.
For a
signed density the physical Coulomb integral need not converge. The following
proposition identifies it with the squared dual norm whenever it does, and this
will serve as the definition of the Coulomb energy for signed densities.

\begin{proposition}\label{prop:coulomb-extension}
Let \(\zeta\in L^{5/3}(\mathbb R^3)\) satisfy
\(\iint_{\mathbb R^3\times\mathbb R^3}\frac{|\zeta(x)\zeta(y)|}{|x-y|}\,dx\,dy<\infty\). Then
\[
\frac12\iint_{\mathbb R^3\times\mathbb R^3}\frac{\zeta(x)\zeta(y)}{|x-y|}\,dx\,dy
=
2\pi\|\zeta\|_{(\dot H^1)^*}^2 .
\]
\end{proposition}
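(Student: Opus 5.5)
The plan is to identify both sides with the Dirichlet energy of the Newtonian potential of \(\zeta\). Set \(\phi=|\cdot|^{-1}*\zeta\). The absolute convergence hypothesis gives \(|\zeta|\,(|\cdot|^{-1}*|\zeta|)\in L^1\), so \(\phi\) is finite a.e. and the left-hand side equals \(\frac12\int\zeta\phi\,dx\). Since \(\zeta\in L^{5/3}\subset L^{6/5}_{\mathrm{loc}}\) only locally, I will not invoke Hardy--Littlewood--Sobolev directly. Instead I will truncate. Let \(\zeta_n=\zeta\mathbf 1_{\{|x|<n,\,|\zeta|<n\}}\), which lies in \(L^1\cap L^\infty\subset L^{6/5}\).

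For \(\zeta_n\) everything is classical. By Fourier analysis,
\[
\frac12\iint\frac{\zeta_n(x)\zeta_n(y)}{|x-y|}\,dx\,dy=\frac{1}{4\pi^2}\int\frac{|\hat\zeta_n(\xi)|^2}{|\xi|^2}\,d\xi ,
\]
which is the constant already used in \eqref{eq:fourier-parseval}. In the same way as there, Cauchy--Schwarz in Fourier variables and the choice \(\hat\psi=\hat\zeta_n/|\xi|^2\) show that
\[
\|\zeta_n\|_{(\dot H^1)^*}^2=\frac{1}{(2\pi)^3}\int\frac{|\hat\zeta_n|^2}{|\xi|^2}\,d\xi .
\]
The optimizer \(\psi=\phi_n/(4\pi)\) lies in \(\dot H^1\) by Hardy--Littlewood--Sobolev, and the supremum over \(C_c^\infty\) equals the supremum over \(\dot H^1\) by density. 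Combining the two identities gives
\[
\frac12\iint\frac{\zeta_n(x)\zeta_n(y)}{|x-y|}\,dx\,dy=2\pi\|\zeta_n\|_{(\dot H^1)^*}^2 ,
\]
since \(\frac{1}{4\pi^2}=2\pi\cdot\frac{1}{(2\pi)^3}\).

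To pass to the limit, I treat each side separately. On the left, \(|\zeta_n(x)\zeta_n(y)|\le|\zeta(x)\zeta(y)|\), and the latter is integrable against \(|x-y|^{-1}\) by hypothesis. Dominated convergence therefore gives convergence to the left-hand side for \(\zeta\). The same domination applied to \(\zeta_n-\zeta_m\) shows that the Coulomb form of \(\zeta_n-\zeta_m\) tends to \(0\). By the identity for truncated densities, \((\zeta_n)\) is then Cauchy in \((\dot H^1)^*\) and has some limit \(T\).

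The main obstacle is identifying \(T\) with \(\zeta\) as a functional, that is, showing \(\int\zeta\psi=\lim\int\zeta_n\psi\) for \(\psi\in C_c^\infty\). This holds by dominated convergence, because \(\zeta\in L^{5/3}\) is locally integrable. The remaining subtlety is that the dual norm of \(\zeta\) is defined as a supremum over \(C_c^\infty\). It is finite and equal to \(\|T\|\) because \(\zeta\) and \(T\) agree on the dense class \(C_c^\infty\). Hence \(\|\zeta_n\|_{(\dot H^1)^*}\to\|\zeta\|_{(\dot H^1)^*}\), and letting \(n\to\infty\) in the identity for \(\zeta_n\) proves the proposition.
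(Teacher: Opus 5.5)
Your proposal is correct and follows essentially the paper's route. You establish the identity for truncated densities, use dominated convergence to show the truncations are Cauchy in $(\dot H^1)^*$, identify the limit with $\zeta$ on $C_c^\infty$, and pass to the limit. The only difference is how the truncated identity is obtained. You truncate also in height to land in $L^1\cap L^\infty$ and verify the identity directly with the explicit optimizer $\phi_n/(4\pi)$. The paper truncates only spatially, using $\mathbf 1_{B_L}\zeta\in L^{6/5}$, and deduces the truncated identity from the $C_c^\infty$ case by continuity in $L^{6/5}$.
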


\begin{proof}
For smooth compactly supported \(\zeta\), Plancherel's theorem and
\(\widehat{|\cdot|^{-1}}=4\pi|\xi|^{-2}\) give
\[
2\pi\|\zeta\|_{(\dot H^1)^*}^2
=
\frac{1}{4\pi^2}\int_{\mathbb R^3}\frac{|\hat\zeta(\xi)|^2}{|\xi|^2}\,d\xi
=
\frac12\iint_{\mathbb R^3\times\mathbb R^3}\frac{\zeta(x)\zeta(y)}{|x-y|}\,dx\,dy .
\]
Both sides are continuous in the \(L^{6/5}\) norm, because
\[
\Big|\int_{\mathbb R^3}\zeta\varphi\,dx\Big|
\le
C\|\zeta\|_{6/5}\|\nabla\varphi\|_2
\qquad\text{for }\varphi\in\dot H^1,
\]
and
\[
\frac12\iint_{\mathbb R^3\times\mathbb R^3}\frac{|\zeta(x)\zeta(y)|}{|x-y|}\,dx\,dy
\le
C\|\zeta\|_{6/5}^2 ,
\]
by H\"older's inequality, the Sobolev embedding \(\dot H^1\subset L^6\), and the
Hardy--Littlewood--Sobolev inequality.

Since \(C_c^\infty(\mathbb R^3)\) is dense in \(L^{6/5}\) and
\(\zeta_L=\mathbf1_{B_L(0)}\zeta\in L^{6/5}(\mathbb R^3)\), the identity holds
with
\(\zeta\) replaced by \(\zeta_L\) or by \(\zeta_L-\zeta_M\). If the physical
integral converges absolutely, dominated convergence gives
\[
\|\zeta_L-\zeta_M\|_{(\dot H^1)^*}\longrightarrow0
\qquad(L,M\to\infty),
\]
so \((\zeta_L)\) is a Cauchy sequence in \((\dot H^1)^*\), and
\[
\int_{\mathbb R^3}\zeta_L\varphi\,dx\longrightarrow\int_{\mathbb R^3}\zeta\varphi
\,dx\qquad\bigl(\varphi\in C_c^\infty(\mathbb R^3)\bigr)
\]
identifies its limit with \(\zeta\). A final application of dominated
convergence
to the identity with \(\zeta=\zeta_L\) shows
\[
\frac12\iint_{\mathbb R^3\times\mathbb R^3}\frac{\zeta(x)\zeta(y)}{|x-y|}\,dx\,dy
=
2\pi\|\zeta\|_{(\dot H^1)^*}^2 .
\]
\end{proof}

\begin{definition}\label{def:coulomb-energy}
For a signed density \(\zeta\) with \(\|\zeta\|_{(\dot H^1)^*}<\infty\) we
define
\[
\mathcal{D}(\zeta,\zeta)
:=
2\pi\|\zeta\|_{(\dot H^1)^*}^2
=
2\pi\sup_{\substack{\varphi\in C_c^\infty(\mathbb R^3)\\
                    \|\nabla\varphi\|_2\leq1}}
\Big|\int_{\mathbb R^3}\zeta\varphi\,dx\Big|^2 ,
\]
the supremum being attained on \(\dot H^1\). Equivalently, by Plancherel's
theorem,
\[
\mathcal{D}(\zeta,\zeta)
=
\frac{1}{4\pi^2}\int_{\mathbb R^3}\frac{|\hat\zeta(\xi)|^2}{|\xi|^2}\,d\xi .
\]
Finite energy means that \(\zeta\) extends continuously to \(\dot H^1\), and by
Proposition~\ref{prop:coulomb-extension} the definition agrees with the physical
Coulomb integral whenever the latter converges absolutely. In particular, no
global
\(L^1\) assumption is required in the theorem below.
\end{definition}

For every test function \(\psi\in\dot H^1\), the estimate corresponding
to~\eqref{eq:fourier-parseval} reads
\begin{equation}\label{eq:coulomb-duality}
\left|
\int_{\mathbb R^3}\zeta\psi
\,dx\right|
\leq
C \mathcal{D}(\zeta,\zeta)^{1/2}
\|\nabla\psi\|_{L^2}.
\end{equation}
If \(\zeta\in L^{5/3}\) and \(\psi\) is compactly supported and Lipschitz,
the integral above is an ordinary Lebesgue integral, since mollifications of
\(\psi\) converge to \(\psi\) in \(L^{5/2}\) and the integrals against
\(\zeta\) converge by H\"older's inequality.

\begin{lemma}\label{lem:shell-volume}
For \(0<t<r\) and \(i=1,2\), let
\[
S_i=\{x:r<|x-R_i|<r+t\} .
\]
Then there is a universal constant \(C\) such that, uniformly in
\(x\in\mathbb R^3\), \(s>0\), and \(i\in\{1,2\}\),
\[
|S_i\cap B_s(x)|\leq Cts^2
\qquad\text{and}\qquad
|S_i|\leq Cr^2t .
\]
\end{lemma}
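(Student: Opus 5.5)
By translation we may take \(R_i=0\) and write \(S=S_i=\{x:r<|x|<r+t\}\), a spherical shell of thickness \(t\) around a sphere of radius \(r>t\). The volume bound is immediate:
\[
|S|=\frac{4\pi}{3}\bigl((r+t)^3-r^3\bigr)=\frac{4\pi}{3}\,t\,(3r^2+3rt+t^2)\le \frac{28\pi}{3}r^2t,
\]
using \(t<r\). So the content of the lemma is the localized bound \(|S\cap B_s(x)|\le Cts^2\). I would prove it by integrating in spherical coordinates about the origin, splitting into two regimes according to the size of \(s\) relative to \(r\).

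\emph{Case \(s\ge r\).} Here the global bound already suffices:
\[
|S\cap B_s(x)|\le|S|\le Cr^2t\le Cts^2 .
\]

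\emph{Case \(s<r\).} Use polar coordinates,
\[
|S\cap B_s(x)|=\int_r^{r+t}\mathcal H^2\bigl(\partial B_\rho(0)\cap B_s(x)\bigr)\,d\rho .
\]
The key geometric fact is that for any \(\rho>0\) and any ball of radius \(s\), the area of the spherical cap \(\partial B_\rho(0)\cap B_s(x)\) is at most \(Cs^2\) with \(C\) universal. One can see this from the fact that this intersection is contained in a cap \(\{y\in\partial B_\rho:|y-p|<2s\}\) around some point \(p\in\partial B_\rho\). Such a cap has area \(2\pi\rho h\), where \(h\) is its height, and for a chord length \(2s\) one has \(h=2s^2/\rho\). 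This gives area \(\le 2\pi\rho\cdot 2(2s)^2/(2\rho)\cdot\)const \(=C s^2\). Alternatively, the cap area equals \(\pi d^2\), with \(d\) the chord distance from the pole, so it is \(\le 4\pi s^2\) exactly, by Archimedes' formula.

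Integrating over \(\rho\in(r,r+t)\) then gives \(|S\cap B_s(x)|\le 4\pi s^2t\).

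The only point needing care is the cap-area bound being uniform in \(\rho\), \(s\) and \(x\). Archimedes' identity handles this cleanly, since a geodesic cap of chordal radius \(d\) on any sphere has area exactly \(\pi d^2\), and \(\partial B_\rho\cap B_s(x)\) lies in such a cap with \(d\le 2s\). With this, the first case is in fact unnecessary, but keeping it is harmless. The constant \(C\) is universal, as claimed.
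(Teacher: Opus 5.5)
Your proof is correct and follows essentially the same route as the paper: integrate in polar coordinates about \(R_i\) and bound the area of \(\partial B_\rho(R_i)\cap B_s(x)\) by \(Cs^2\) uniformly in \(\rho\), with the volume bound coming from the same integration. The only difference is in how the cap-area bound is obtained. The paper splits into \(s\ge\rho/2\) (whole sphere, area \(4\pi\rho^2\le16\pi s^2\)) and \(s<\rho/2\) (explicit cap formula \(\frac{\pi\rho}{d}(s^2-(\rho-d)^2)\le2\pi s^2\)). You instead contain the intersection in a chordal cap of radius \(2s\) and apply Archimedes' \(\pi d^2\) formula, which gives \(4\pi s^2\) in one step; your intermediate height computation is garbled, but the Archimedes version you settle on is sound.
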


\begin{proof}
Fix \(x\in\mathbb R^3\), \(s>0\), and a center \(a=R_i\), and set
\(d=|x-a|\). We claim that for every \(\tau>0\)
\[
\mathcal H^2\bigl(\partial B_\tau(a)\cap B_s(x)\bigr)\le16\pi s^2 .
\]
Indeed, if \(s\geq\tau/2\), then the sphere has total area \(4\pi\tau^2\), so
\[
\mathcal H^2\bigl(\partial B_\tau(a)\cap B_s(x)\bigr)\le4\pi\tau^2\le16\pi s^2 .
\]
Otherwise a nonempty intersection has \(d>\tau/2\), and its area is the
spherical
cap area
\[
\frac{\pi\tau}{d}\bigl(s^2-(\tau-d)^2\bigr)\le2\pi s^2 .
\]
The empty intersection contributes nothing.

Polar integration over \(r<\tau<r+t\) therefore gives
\[
|S_i\cap B_s(x)|
=
\int_r^{r+t}\mathcal H^2\,dx\bigl(\partial B_\tau(a)\cap B_s(x)\bigr)\,d\tau
\le16\pi s^2t .
\]
The same integration, with \(\tau<2r\), gives
\[
|S_i|=4\pi\int_r^{r+t}\tau^2\,d\tau\le16\pi r^2t .
\]
\end{proof}

\begin{theorem}\label{thm:screened-coulomb}
Let \(\zeta\in L^{5/3}(\mathbb R^3)\) with \(\mathcal{D}(\zeta,\zeta)<\infty\).
Then, for every \(r>0\), \(0<t<r/4\), and \(x\in\mathbb R^3\),
\[
\Big|\int_{K_r}\frac{\zeta(y)}{|x-y|}\,dy\Big|
\leq
C\mathcal{D}(\zeta,\zeta)^{1/2}t^{-1/2}\bigl[1+\log(r/t)\bigr]^{1/2}
+
C\|\zeta\|_{5/3}t^{1/5}.
\]
\end{theorem}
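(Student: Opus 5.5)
Fix \(x\in\mathbb R^3\). The plan is to write \(\mathbf 1_{K_r}(y)|x-y|^{-1}\) as a test function \(\psi\) with finite, logarithmically controlled \(\dot H^1\) norm, plus a remainder supported in a thin region near singular sets. The duality bound~\eqref{eq:coulomb-duality} handles the first piece and H\"older with \(\|\zeta\|_{5/3}\) handles the second. Two sources of irregularity must be smoothed. The first is the Coulomb singularity at \(y=x\), which we cut at scale \(t\) exactly as in Lemma~\ref{lem:one-center-attraction}. The second is the jump of \(\mathbf 1_{K_r}\) across \(\partial K_r\), which we replace by a Lipschitz cutoff living in the shell \(K_{r+t}\setminus K_r\) of width \(t\).

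Concretely, let \(\chi(y)=\min\{1,(r+t-\delta(y))_+/t\}\). Then \(\chi=1\) on \(K_r\), \(\chi=0\) outside \(K_{r+t}\), and \(|\nabla\chi|\le t^{-1}\mathbf 1_{S_1\cup S_2}\), since \(\delta\) is \(1\)-Lipschitz. Set \(\psi(y)=\chi(y)\min\{|x-y|^{-1},t^{-1}\}\). We then write
\[
\int_{K_r}\frac{\zeta}{|x-y|}=\int\zeta\psi-\int\zeta(\chi-\mathbf 1_{K_r})\min\{|x-y|^{-1},t^{-1}\}+\int_{K_r}\zeta\bigl(|x-y|^{-1}-t^{-1}\bigr)_+ .
\]
The last term is bounded by H\"older, \(\|\zeta\|_{5/3}\|(|\cdot|^{-1}-t^{-1})_+\|_{5/2}\le C\|\zeta\|_{5/3}t^{1/5}\). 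The middle term is supported in the shell \(S=S_1\cup S_2\) and has integrand at most \(|\zeta|\min\{|x-y|^{-1},t^{-1}\}\). Its \(L^{5/2}\) norm over \(S\) is estimated dyadically in \(|x-y|\sim 2^k t\), using Lemma~\ref{lem:shell-volume} (\(|S\cap B_s(x)|\le Cts^2\)). For \(s\le t\) the volume bound \(Cs^3\) gives a contribution \(t^{-5/2}t^3\). For larger \(s\) the contribution is \(\sum_k (2^kt)^{-5/2}t(2^kt)^2=t^{1/2}\sum_k 2^{-k/2}\). The total is \(\le Ct^{1/2}\), so the \(L^{5/2}\) norm is \(\le Ct^{1/5}\), again giving \(C\|\zeta\|_{5/3}t^{1/5}\).

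It remains to bound \(\|\nabla\psi\|_2^2\le 2\int\chi^2|\nabla\min\{|x-y|^{-1},t^{-1}\}|^2+2t^{-2}\int_S\min\{|x-y|^{-2},t^{-2}\}\). The first integral is \(\le 4\pi/t\). The second is the main obstacle, since it is where the logarithm appears and where uniformity in the geometry must hold. We decompose dyadically again. On \(B_t(x)\) the integrand is \(t^{-4}\) on volume \(\le Ct^3\), contributing \(Ct^{-1}\). On the annuli \(2^{k}t\le|x-y|<2^{k+1}t\) we have \(|S\cap B_{2^{k+1}t}(x)|\le Ct(2^kt)^2\), so each annulus contributes \(t^{-2}(2^kt)^{-2}\cdot Ct(2^kt)^2=Ct^{-1}\), independent of \(k\). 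Only the annuli with \(2^kt\lesssim r\) are relevant, because for larger \(s\) we use the total volume \(|S|\le Cr^2t\) from Lemma~\ref{lem:shell-volume} and sum \(t^{-2}s^{-2}r^2t\) geometrically over \(s\ge r\), giving \(Ct^{-1}\). There are \(O(1+\log(r/t))\) relevant annuli, so \(\|\nabla\psi\|_2^2\le Ct^{-1}[1+\log(r/t)]\). This bound is uniform in \(\ell\) and \(x\) because Lemma~\ref{lem:shell-volume} is applied to each \(S_i\) separately.

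Finally, \(\psi\) is compactly supported and Lipschitz, so~\eqref{eq:coulomb-duality} applies as an ordinary integral and gives \(|\int\zeta\psi|\le C\mathcal D(\zeta,\zeta)^{1/2}t^{-1/2}[1+\log(r/t)]^{1/2}\). Combining the three pieces yields the claim. The hypothesis \(t<r/4\) is used only so that \(t<r\) in Lemma~\ref{lem:shell-volume} and the logarithm is nonnegative.
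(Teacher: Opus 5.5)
Your proposal is correct and is essentially the paper's proof. You use the same test function \(\psi=\theta v_t\), the same H\"older bound \(\|k_x-\psi\|_{5/2}\le Ct^{1/5}\) for the truncation error, and the same gradient estimate via Lemma~\ref{lem:shell-volume} in the two regimes \(s\le r\) and \(s\ge r\), which is where the logarithm comes from. The only difference is cosmetic: you sum over dyadic annuli in \(|x-y|\), whereas the paper uses the layer-cake identity~\eqref{eq:vt-moment}.
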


\begin{proof}
For brevity we write
\[
k_x(y)=\frac{\mathbf 1_{K_r}(y)}{|x-y|} ,
\]
so that \(\int_{K_r}\frac{\zeta(y)}{|x-y|}\,dy=\int_{\mathbb R^3}\zeta k_x\,dx\). Since
\(k_x\notin\dot H^1\), the estimate~\eqref{eq:coulomb-duality} does not apply
to it directly. We therefore set \(\psi=\theta v_t\), where
\[
v_t(y)=\frac1{\max\{|x-y|,t\}},
\qquad
\theta(y)=\min\left\{1,\max\left\{0,\frac{r+t-\delta(y)}{t}\right\}\right\} .
\]
The two terms \(\int_{\mathbb R^3}\zeta(k_x-\psi)\,dx\) and \(\int_{\mathbb R^3}\zeta\psi\,dx\) are then
estimated separately.

By construction
\[
\theta=1\text{ in }K_r,
\qquad
\theta=0\text{ outside }K_{r+t},
\qquad
0\le\theta\le1 ,
\]
with the transition layer contained in \(S_1\cup S_2\). In particular \(\psi\)
has compact support in \(\overline{K_{r+t}}\). Since
\(\delta\) is \(1\)-Lipschitz,
\[
|\nabla\theta|\le t^{-1},
\qquad
v_t\le t^{-1},
\qquad
|\nabla v_t|\le t^{-2}
\qquad\text{a.e.},
\]
so \(\psi\) is Lipschitz as well. Hence \(\psi\in\dot H^1\cap L^{5/2}\) is an
admissible test in~\eqref{eq:coulomb-duality}. Moreover \(k_x\in L^{5/2}\),
because
\[
\int_{K_r}\frac{dy}{|x-y|^{5/2}}
\le
\int_{B_{2r}(x)}\frac{dz}{|z|^{5/2}}
+
|K_r|\,(2r)^{-5/2}
\le
Cr^{1/2}
<\infty .
\]
All pairings below are therefore absolutely convergent.

We first estimate \(k_x-\psi\):
\[
\begin{aligned}
|k_x(y)-\psi(y)|
&=
\Bigl|\frac{\mathbf 1_{K_r}(y)}{|x-y|}-\theta(y)v_t(y)\Bigr|\\
&\le
\mathbf 1_{K_r}(y)\Bigl|\frac1{|x-y|}-v_t(y)\Bigr|
+
|\mathbf 1_{K_r}(y)-\theta(y)|\,v_t(y)\\
&=
\mathbf 1_{K_r}(y)\Bigl(\frac1{|x-y|}-\frac1t\Bigr)_+
+
|\mathbf 1_{K_r}(y)-\theta(y)|\,v_t(y)\\
&\le
\Bigl(\frac1{|x-y|}-\frac1t\Bigr)_+
+
\mathbf 1_{S_1}(y)v_t(y)
+
\mathbf 1_{S_2}(y)v_t(y).
\end{aligned}
\]
The first term of the last line is supported in \(B_t(x)\), and the substitution
\(y=x+tz\) gives
\[
\int_{\mathbb R^3}\,dx\Bigl(\frac1{|x-y|}-\frac1t\Bigr)_+^{5/2}dy
=
t^{1/2}\int_{B_1}\,dx\Bigl(\frac1{|z|}-1\Bigr)_+^{5/2}dz
\le
Ct^{1/2} .
\]
For the other two terms, set \(F(s)=|S_i\cap B_s(x)|\). Tonelli's theorem gives,
for every \(p>0\),
\begin{equation}\label{eq:vt-moment}
\int_{S_i}v_t^p\,dx=p\int_t^\infty F(s)s^{-p-1}\,ds .
\end{equation}
Taking \(p=5/2\) and using the bound \(F(s)\le Cts^2\) of
Lemma~\ref{lem:shell-volume}, we have
\[
\int_{S_i}v_t^{5/2}
\,dx=
\frac52\int_t^\infty F(s)s^{-7/2}\,ds
\le
Ct\int_t^\infty s^{-3/2}\,ds
\le
Ct^{1/2} .
\]
Each of the three terms is therefore bounded in \(L^{5/2}\) by \(Ct^{1/2}\), so
\begin{equation}\label{eq:trunc-error}
\|k_x-\psi\|_{5/2}\le Ct^{1/5}.
\end{equation}

Next, we have
\[
\begin{aligned}
\int_{\mathbb R^3}|\nabla\psi|^2
\,dx&=
\int_{\mathbb R^3}|\theta\nabla v_t\,dx+v_t\nabla\theta|^2\\
&\le
2\int_{\mathbb R^3}\theta^2|\nabla v_t|^2
\,dx+
2t^{-2}\int_{\operatorname{supp}\nabla\theta}v_t^2\,dx\\
&\le
2\int_{\mathbb R^3}|\nabla v_t|^2
\,dx+
2t^{-2}\sum_{i=1}^2\int_{S_i}v_t^2 \,dx.
\end{aligned}
\]
Moreover,
\[
\int_{\mathbb R^3}|\nabla v_t|^2
\,dx=
4\pi\int_t^\infty s^{-2}\,ds
=
\frac{4\pi}{t} .
\]
From~\eqref{eq:vt-moment} with \(p=2\), we obtain
\[
\begin{aligned}
\int_{S_i}v_t^2
\,dx&=
2\int_t^\infty\frac{F(s)}{s^3}\,ds\\
&\le
Ct\int_t^r\frac{ds}{s}
+
Cr^2t\int_r^\infty\frac{ds}{s^3}\\
&\le
Ct\Bigl(1+\log\frac rt\Bigr),
\end{aligned}
\]
where the middle inequality uses \(F(s)\le Cts^2\) for \(s\le r\) and
\(F(s)\le|S_i|\le Cr^2t\) for \(s\ge r\). Therefore
\begin{equation}\label{eq:trunc-gradient}
\|\nabla\psi\|_2
\leq
Ct^{-1/2}
\bigl[1+\log(r/t)\bigr]^{1/2}.
\end{equation}

From H\"older's inequality and~\eqref{eq:trunc-error}, we have
\[
\left|
\int_{\mathbb R^3}\zeta(k_x-\psi)
\,dx\right|
\leq
C\|\zeta\|_{5/3}t^{1/5}.
\]
Then combining~\eqref{eq:coulomb-duality}
and~\eqref{eq:trunc-gradient}, we obtain
\[
\left|
\int_{\mathbb R^3}\zeta\psi
\,dx\right|
\leq
C\mathcal{D}(\zeta,\zeta)^{1/2}
t^{-1/2}
\bigl[1+\log(r/t)\bigr]^{1/2}.
\]
Therefore,
the two bounds above give
\[
\begin{aligned}
\Big|\int_{K_r}\frac{\zeta(y)}{|x-y|}\,dy\Big|
&=
\int_{\mathbb R^3}\zeta(k_x-\psi)
\,dx+
\int_{\mathbb R^3}\zeta\psi\,dx\\
&\le
\Big|\int_{\mathbb R^3}\zeta(k_x-\psi)\,dx\Big|
+
\Big|\int_{\mathbb R^3}\zeta\psi\,dx\Big|\\
&\le
C\|\zeta\|_{5/3}t^{1/5}
+
C\mathcal{D}(\zeta,\zeta)^{1/2}t^{-1/2}\bigl[1+\log(r/t)\bigr]^{1/2},
\end{aligned}
\]
which is the bound of the theorem.
\end{proof}

The following comparison principle propagates a bound on the sphere
\(\{\delta=r\}\) to the exterior region.
\begin{lemma}\label{lem:exterior-barrier}
Let \(\zeta\in L^{5/3}(\mathbb R^3)\). For \(r>0\) and \(B\geq0\), if
\[
\Big|\int_{K_r}\frac{\zeta(y)}{|x-y|}\,dy\Big|\le B
\qquad\text{on }\{\delta=r\} ,
\]
then
\[
\delta(x)\Bigl|\int_{K_r}\frac{\zeta(y)}{|x-y|}\,dy\Bigr|\le2rB
\qquad\text{for }\delta(x)\geq r,
\qquad\text{and}\qquad
\Bigl|\int_{K_r}\zeta\,dx\Bigr|\le2rB .
\]
\end{lemma}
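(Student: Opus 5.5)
The plan is a maximum-principle comparison with an explicit two-center barrier. Set
\[
W(x)=\int_{K_r}\frac{\zeta(y)}{|x-y|}\,dy,
\qquad
h(x)=rB\Bigl(\frac{1}{|x-R_1|}+\frac{1}{|x-R_2|}\Bigr).
\]
On the open exterior region \(K_r^c=\{\delta>r\}\) both functions are harmonic: \(W\) because it is the Newtonian potential of \(\zeta\mathbf 1_{K_r}\), which is supported in \(K_r\), and \(h\) because the nuclei lie inside \(K_r\). The boundary of \(K_r^c\) is \(\{\delta=r\}\), and on it one of the two distances equals \(r\). Hence \(h\ge rB\cdot r^{-1}=B\ge|W|\) there, by hypothesis.

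\textbf{Regularity and decay of \(W\).} Since \(K_r\) is bounded and \(\zeta\in L^{5/3}\), the density \(\zeta\mathbf 1_{K_r}\) lies in \(L^1\cap L^{5/3}\). The kernel \(|\cdot|^{-1}\) is in \(L^{5/2}\) locally and bounded away from the origin, so \(W\) is continuous on all of \(\mathbb R^3\). Indeed, \(W\) is the convolution of an \(L^{5/3}\) function with compact support against \(L^{5/2}_{\mathrm{loc}}\) plus a bounded part, and continuity follows from continuity of translations in \(L^{5/2}\).

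Moreover, \(|x|\,W(x)\to\int_{K_r}\zeta\) as \(|x|\to\infty\), by dominated convergence, since \(|x|/|x-y|\to1\) uniformly for \(y\in K_r\). In particular \(W\to0\) at infinity. Likewise \(h\to0\) at infinity and \(|x|\,h(x)\to2rB\).

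\textbf{Comparison.} The functions \(h-W\) and \(h+W\) are harmonic in \(\{\delta>r\}\), continuous up to the boundary, nonnegative on \(\{\delta=r\}\), and tend to \(0\) at infinity. The minimum principle on this unbounded domain then gives \(h\pm W\ge0\). To justify it, apply the minimum principle on \(\{\delta>r\}\cap B_\rho\) to \(h\pm W+\varepsilon\) and let \(\rho\to\infty\), then \(\varepsilon\to0\). This yields \(|W|\le h\) on \(\{\delta\ge r\}\).

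Since \(\delta(x)\le|x-R_i|\) for \(i=1,2\),
\[
\delta(x)\,|W(x)|\le rB\Bigl(\frac{\delta(x)}{|x-R_1|}+\frac{\delta(x)}{|x-R_2|}\Bigr)\le 2rB.
\]

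\textbf{Total charge.} Multiply \(|W|\le h\) by \(|x|\) and let \(|x|\to\infty\). The limits computed above give \(\bigl|\int_{K_r}\zeta\bigr|\le2rB\).

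\textbf{Main obstacle.} The only delicate point is boundary continuity and decay of \(W\) for a merely \(L^{5/3}\) density; otherwise the maximum principle could fail on the unbounded exterior domain. Both properties follow from the compact support of \(\zeta\mathbf 1_{K_r}\) together with Hölder's inequality, as sketched above.
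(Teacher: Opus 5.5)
Your proposal is correct and follows essentially the same route as the paper. Both use the barrier $B\,b_r$ with $b_r=r/|x-R_1|+r/|x-R_2|$, apply the maximum principle on $K_r^c\cap B_L(0)$ and let $L\to\infty$, then use $\delta b_r\le 2r$ and the limit $|x|\,W(x)\to\int_{K_r}\zeta$ for the total charge. The only differences are cosmetic: you prove continuity of $W$ by splitting the kernel instead of using capped potentials, and you add an explicit $+\varepsilon$ in the truncation step.
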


\begin{proof}
Write \(w=\mathbf1_{K_r}\zeta\). Since \(K_r\) is bounded,
\(w\in L^1\cap L^{5/3}\). The capped potentials
\[
\int_{K_r}\frac{\zeta(y)}{\max\{|x-y|,\varepsilon\}}\,dy
\]
are continuous, and
\[
\sup_{x\in\mathbb R^3}\Bigl|
\int_{K_r}\frac{\zeta(y)}{|x-y|}\,dy
-
\int_{K_r}\frac{\zeta(y)}{\max\{|x-y|,\varepsilon\}}\,dy
\Bigr|
\le C\|w\|_{5/3}\varepsilon^{1/5} .
\]
Since the bound tends to \(0\) as \(\varepsilon\downarrow0\), the potential
\(\int_{K_r}\frac{\zeta(y)}{|x-y|}\,dy\) is a uniform limit of continuous
functions of \(x\), hence continuous.
Since \(K_r\)
stays at positive distance from every compact subset of \(K_r^c\),
differentiation under the integral sign is justified there. Hence
\[
\Delta\left[\int_{K_r}\frac{\zeta(y)}{|x-y|}\,dy\right]=0
\qquad\text{in }K_r^c .
\]
Moreover, since \(w\) has bounded support, dominated convergence gives
\[
\int_{K_r}\frac{\zeta(y)}{|x-y|}\,dy\longrightarrow0,
\qquad
|x|\int_{K_r}\frac{\zeta(y)}{|x-y|}\,dy\longrightarrow\int_{\mathbb R^3} w
\,dx\quad(|x|\to\infty).
\]
Set
\[
b_r(x)=\frac{r}{|x-R_1|}+\frac{r}{|x-R_2|}.
\]
The function \(b_r\) is harmonic in \(\{\delta>r\}\), tends to zero at infinity,
and satisfies \(b_r\geq1\) on \(\{\delta=r\}\). Moreover
\[
\delta(x)b_r(x)\leq2r.
\]

By hypothesis,
\[
\Bigl|\int_{K_r}\frac{\zeta(y)}{|x-y|}\,dy\Bigr|\le B
\qquad\text{on }\{\delta=r\} .
\]
Then we apply the exterior maximum principle to
\[
\int_{K_r}\frac{\zeta(y)}{|x-y|}\,dy-Bb_r
\quad\text{and}\quad
-\int_{K_r}\frac{\zeta(y)}{|x-y|}\,dy-Bb_r
\]
and obtain
\begin{equation}\label{eq:barrier-comparison}
\Bigl|\int_{K_r}\frac{\zeta(y)}{|x-y|}\,dy\Bigr|\leq Bb_r(x)
\qquad\text{for }\delta(x)>r .
\end{equation}
The comparison in the unbounded region follows by
the maximum principle on \(K_r^c\cap B_L(0)\), then we let
\(L\to\infty\). On \(\{\delta=r\}\) it follows directly from the hypothesis
and \(b_r\geq1\).
Multiplying by \(\delta(x)\) yields
\[
\delta(x)\Bigl|\int_{K_r}\frac{\zeta(y)}{|x-y|}\,dy\Bigr|\leq2rB
\qquad\text{for }\delta(x)\geq r .
\]

Since
\[
|x|\int_{K_r}\frac{\zeta(y)}{|x-y|}\,dy\longrightarrow\int_{K_r}\zeta
\,dx\qquad\text{and}\qquad
|x|b_r(x)\longrightarrow2r
\qquad\text{as }|x|\to\infty,
\]
letting \(|x|\to\infty\) in~\eqref{eq:barrier-comparison} proves the last
assertion.
\end{proof}

\section{Exterior \texorpdfstring{\(L^1\)}{L1}-Estimates}
\label{sec:exterior-estimates}

In this section we prove an exterior \(L^1\)-estimate for the number of
electrons outside \(K_r\).

\begin{proposition}\label{prop:screened-quadratic}
For every \(r>0\), \(0<\lambda\leq1/2\), and \(h>0\),
\[
\frac{m_{(1+\lambda)r}^2}{16}
\leq
\frac14 S_rm_r+2h\mathcal D_r
+\left(\alpha h+\frac{\beta}{\lambda^2h}\right)m_r,
\]
where
\[
\alpha=\frac{c_{\mathrm D}^2}{4c_{\mathrm{TF}}},
\qquad
\beta=2c_{\mathrm W}\pi^2 .
\]
\end{proposition}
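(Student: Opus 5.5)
The plan is to rerun the folded-localization argument of Lemma~\ref{lem:folded-localization-rough}, but now in the exterior region $\{\delta>r\}$ and with $V$ replaced by the screened potential $\Phi_r$. Fix $\nu\in\mathbb S^2$, and use the same profile functions $a_0,a_1$. Take the cutoffs in the radial variable $p_\nu(x)=\nu\cdot F(x)$, but only over levels $l\ge r$, with transition width $\lambda r$ in place of $h$. Concretely, set $f_j(x)=a_j((p_\nu(x)-l)/(\lambda r))$. Averaging over $l\in(r,\infty)$ and $\nu$ will produce the factors below, while $h$ enters only through the cross-term error.

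The starting inequality is~\eqref{eq:partition-localization}, which compares $\rho_0$ with the split $f_0^2\rho_0$, $f_1^2\rho_0$. Since $f_1$ is supported where $p_\nu>l\ge r$, hence in $\{\delta>r\}$, the left side contains the full interaction of $f_1^2\rho_0$ with $c_r$. I move the part $2\mathcal D(c_r,f_1^2\rho_0)=\int U_r f_1^2\rho_0$ to the right. This turns $\int Vf_1^2\rho_0$ into $\int\Phi_r f_1^2\rho_0\le\int[\Phi_r]_+f_1^2\rho_0$. What remains on the left is the interaction of $f_0^2e_r$ with $f_1^2e_r$. Averaging in $l$ and $\nu$ via $\int_0^\infty f_1^2\,dl\le p_\nu(x)_+$ and the spherical identity, as before, gives $\tfrac14\int\delta[\Phi_r]_+e_r\le\tfrac14 S_r m_r$. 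The gradient error is $c_{\mathrm W}\cdot 2(\pi/2)^2/(\lambda r)^2$ on a layer of length $\lambda r$ in $l$, hence at most $c_{\mathrm W}\pi^2/(2\lambda r)$ per electron.

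This does not immediately have the form $\beta/(\lambda^2h)$, so I will instead keep $h$ as the transition width and restrict the levels $l$ to the window $(r,(1+\lambda)r)$. Normalizing the $l$-average by the window length $\lambda r$ is what makes $\lambda$ appear squared. The Dirac error gives $\alpha h\, m_r$ after averaging. On the lower-bound side, the $L_h\ge H-h$ trick restricted to the $e_r$ part gives $\tfrac18\iint e_re_r|F(x)-F(y)|/|x-y|-2h\mathcal D_r$. By the splitting $M_1^2+M_2^2\ge m^2/2$ applied to the mass beyond $(1+\lambda)r$, and by discarding the remaining pairs, this is at least $m_{(1+\lambda)r}^2/16-2h\mathcal D_r$.

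\textbf{Main obstacle.} The step I expect to be hardest is the bookkeeping of the $l$-range. The Coulomb lower bound must only see pairs with at least one point beyond $(1+\lambda)r$ and the other separated from it by the averaged $H$-function. The upper bound must retain the clean $\frac14 S_r m_r$ with the correct normalization. If a single normalization fails, I will choose the averaging measure $dl$ on $(r,\infty)$ together with a smooth weight equal to $1$ beyond $(1+\lambda)r$ and ramping over $(r,(1+\lambda)r)$. The derivative of that weight, of size $(\lambda r)^{-1}$, should produce the $\beta/(\lambda^2 h)$ term after Cauchy--Schwarz against $h$.
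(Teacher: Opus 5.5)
There is a genuine gap in the lower bound, at exactly the step you defer to ``bookkeeping''. As long as you localize in $p_\nu=\nu\cdot F$, you must discard the levels $l<r$. Otherwise $f_1$ charges $K_r$, where $V$ is not controlled by $S_r$. Your lower bound $\frac18\iint e_r(x)e_r(y)|F(x)-F(y)|/|x-y|\,dx\,dy-2h\mathcal D_r$, however, rests on the identity $H(a,b)+H(b,a)+H(-a,-b)+H(-b,-a)=|a-b|$, which uses all levels $l>0$. With $l\ge r$, the symmetrized sum equals $|a-b|$ minus the length of $[\min\{a,b\},\max\{a,b\}]\cap(-r,r)$. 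Geometrically, only planes avoiding the ball $B_r(0)$ in the folded picture may separate $F(x)$ from $F(y)$.

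For nearby pairs in the same $\Omega_j$ on $\{\delta=(1+\lambda)r\}$, the $\nu$-average of what survives is only a fraction of order $\lambda^{3/2}$ of $|F(x)-F(y)|/2$. In the limit, no such plane separates two points of $\partial B_r(0)$ at all. So you cannot reach $\frac1{16}m_{(1+\lambda)r}^2$ uniformly in $\lambda\in(0,1/2]$; at best you get a constant that degenerates as $\lambda\to0$. Shrinking the levels to the window $(r,(1+\lambda)r)$ makes this worse. A ramp weight on the level measure $dl$ does not help either, since it still gives no weight to the levels below $r$ that the separation needs.

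Your explanation of where $\lambda^{-2}$ comes from (normalizing by the window length, or Cauchy--Schwarz against $h$) does not correspond to an actual estimate. Your own gradient computation with $F$ produces no $\lambda^{-2}$ at all, which shows the price has been skipped rather than paid.

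The missing idea is to deform the map instead of the level range. Set $\Psi=gF$ with $g=\min\{1,\max\{0,(\delta-r)/(\lambda r)\}\}$, and take $f_j=a_j((\nu\cdot\Psi-l)/h)$ for all $l>0$. Each required property then follows directly:
\begin{itemize}
\item Since $\Psi=0$ on $K_r$, you automatically get $f_0=1$ and $f_1=0$ there. The split into $c_r$ and $e_r$ and the passage from $V$ to $\Phi_r$ go through as you describe, and the symmetrization identity holds exactly.
\item Since $|\Psi|\le\delta$, you keep the term $\frac14S_rm_r$.
\item Since $\Psi=F$ on $K_{(1+\lambda)r}^c$ and $F$ is an isometry on each $\Omega_j$, you obtain $\frac18(\tilde M_1^2+\tilde M_2^2)\ge\frac1{16}m_{(1+\lambda)r}^2$.
\item The price is the Lipschitz constant of $\Psi$, at most $2+1/\lambda\le2/\lambda$ for $\lambda\le1/2$. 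It gives $|\nabla f_0|^2+|\nabla f_1|^2\le2\pi^2/(\lambda^2h^2)$ on a layer of $l$-length $h$, which is precisely the term $\beta/(\lambda^2h)$.
\end{itemize}
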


\begin{proof}
We use the folded map \(F\) from Lemma~\ref{lem:folded-localization-rough}
and set
\[
g(x)=\min\left\{1,\max\left\{0,\frac{\delta(x)-r}{\lambda r}\right\}\right\},
\qquad \Psi(x)=g(x)F(x).
\]
Then \(|\Psi(x)|\leq\delta(x)\), \(\Psi=0\) in \(K_r\), and
\(\Psi=F\) in \(\{\delta\geq(1+\lambda)r\}\). The radial map
\(z\mapsto\min\{1,\max\{0,(|z|-r)/(\lambda r)\}\}z\)
is Lipschitz with constant at most
\(1+(1+\lambda)/\lambda=2+1/\lambda\leq2/\lambda\). Composition with the
\(1\)-Lipschitz map \(F\) gives the same bound for \(\Psi\).

We use the same cutoff functions as in the proof of
Lemma~\ref{lem:folded-localization-rough} and define
\[
q(s)=\min\{1,\max\{0,s\}\},\quad
a_0(s)=\cos(\pi q(s)/2),\quad
a_1(s)=\sin(\pi q(s)/2).
\]
For \(\nu\in\mathbb S^2\) and \(l>0\), set
\[
p_\nu(x)=\nu\cdot\Psi(x),\qquad
f_j(x)=a_j\bigl((p_\nu(x)-l)/h\bigr),\qquad
A_{\nu,l}=\{l<p_\nu<l+h\}.
\]
These functions form a Lipschitz partition,
\[
f_0^2+f_1^2=1,
\qquad
f_0=1,\quad f_1=0
\qquad\text{in }K_r .
\]
The gradients are supported in \(A_{\nu,l}\subset K_r^c\) and satisfy
\[
|\nabla f_0|^2+|\nabla f_1|^2
\leq \frac{2\pi^2}{\lambda^2h^2}\mathbf1_{A_{\nu,l}}
\qquad\text{a.e.}
\]
Since \(f_0=1\) and \(f_1=0\) in \(K_r\),~\eqref{eq:partition-localization}
gives
\[
\begin{aligned}
&\iint_{\mathbb R^3\times\mathbb R^3}\frac{f_0(x)^2\rho_0(x)f_1(y)^2\rho_0(y)}{|x-y|}\,dx\,dy\\
&=
\iint_{\mathbb R^3\times\mathbb R^3}\frac{e_r(x)e_r(y)}{|x-y|}f_0(x)^2f_1(y)^2\,dx\,dy
+\int_{\mathbb R^3} U_rf_1^2e_r \,dx,
\end{aligned}
\]
Consequently, since \(\int_{\mathbb R^3} Vf_1^2e_r\,dx-\int_{\mathbb R^3} U_rf_1^2e_r\,dx=\int_{\mathbb R^3}\Phi_rf_1^2e_r\,dx\),
we obtain
\begin{equation}\label{eq:screened-cutoff-local}
\begin{aligned}
I_{\nu,l}
&:=\iint_{\mathbb R^3\times\mathbb R^3}\frac{e_r(x)e_r(y)}{|x-y|}f_0(x)^2f_1(y)^2\,dx\,dy\\
&\leq\int_{\mathbb R^3}[\Phi_r(x)]_+f_1^2e_r
\,dx+\left(\alpha+\frac{\beta}{\lambda^2h^2}\right)
\int_{A_{\nu,l}}e_r\,dx.
\end{aligned}
\end{equation}
All terms in~\eqref{eq:screened-cutoff-local} are finite, since
\(\mathcal D_r<\infty\) and
\[
0\leq\delta(x)[\Phi_r(x)]_+\leq\delta(x)V(x)\leq Z
\qquad\text{in }K_r^c .
\]

For \(a,b\in\mathbb R\), define
\[
L_h(a,b)=\int_0^\infty
a_0((a-l)/h)^2a_1((b-l)/h)^2\,dl ,
\]
the length of the set of levels \(l\) at which both \(a_0((a-l)/h)\) and
\(a_1((b-l)/h)\) equal \(1\).
From Tonelli's theorem, we can integrate~\eqref{eq:screened-cutoff-local}
over \(l>0\) to obtain
\begin{equation}\label{eq:screened-height}
J_\nu:=\iint_{\mathbb R^3\times\mathbb R^3}\frac{e_r(x)e_r(y)}{|x-y|}
L_h(p_\nu(x),p_\nu(y))\,dx\,dy
\leq Q_\nu+\left(\alpha h+\frac{\beta}{\lambda^2h}\right)m_r,
\end{equation}
where
\[
Q_\nu=\int_{\mathbb R^3}(p_\nu)_+[\Phi_r(x)]_+e_r \,dx.
\]
Here we used
\[
\int_0^\infty f_1(x)^2\,dl\leq(p_\nu(x))_+
\quad\text{and}\quad
\int_0^\infty\mathbf1_{A_{\nu,l}}(x)\,dl\leq h .
\]

We next symmetrize before discarding any Coulomb factor. Recall the function
\(H(a,b)=(b-\max\{a,0\})_+\) from the proof of
Lemma~\ref{lem:folded-localization-rough}. Since
\[
a_0((a-l)/h)^2a_1((b-l)/h)^2=1
\qquad\text{for }\max\{a,0\}<l<b-h ,
\]
we have \(L_h(a,b)\geq H(a,b)-h\).
Moreover,
\[
\begin{aligned}
|a-b|&=H(a,b)+H(b,a)+H(-a,-b)+H(-b,-a)\\
&\leq L_h(a,b)+L_h(b,a)+L_h(-a,-b)+L_h(-b,-a)+4h .
\end{aligned}
\]
We integrate the resulting inequality against \(e_r(x)e_r(y)/|x-y|\,dx\,dy\)
and use the symmetry in \(x,y\) to obtain
\[
\frac14\iint_{\mathbb R^3\times\mathbb R^3}\frac{e_r(x)e_r(y)}{|x-y|}
|p_\nu(x)\,dx\,dy-p_\nu(y)|\,dx\,dy
\leq\frac{J_\nu+J_{-\nu}}2+2h\mathcal{D}_r,
\]
where we used \(\iint_{\mathbb R^3\times\mathbb R^3} e_r(x)e_r(y)/|x\,dx\,dy-y|\,dx\,dy=2\mathcal D_r\).

Then we average this inequality and~\eqref{eq:screened-height} with respect to
the probability measure \(d\omega=dS/(4\pi)\) on \(\mathbb S^2\) to obtain
\begin{equation}\label{eq:screened-averaged}
\begin{aligned}
&\frac18\sum_{j=1}^2\iint_{\Omega_j\cap K_{(1+\lambda)r}^c}
\frac{e_r(x)e_r(y)}{|x-y|}|\Psi(x)\,dx\,dy-\Psi(y)|\,dx\,dy\\
&\qquad\leq\frac18\iint_{\mathbb R^3\times\mathbb R^3}\frac{e_r(x)e_r(y)}{|x-y|}|\Psi(x)\,dx\,dy-\Psi(y)|\,dx\,dy\\
&\qquad\leq\frac14 S_rm_r+2h\mathcal D_r
+\left(\alpha h+\frac{\beta}{\lambda^2h}\right)m_r .
\end{aligned}
\end{equation}
Here we use
\[
\int_{\mathbb S^2}|\nu\cdot z|\,d\omega=\frac{|z|}{2},
\qquad
\int_{\mathbb S^2}(\nu\cdot z)_+\,d\omega=\frac{|z|}{4} ,
\]
together with \(|\Psi|[\Phi_r(x)]_+\leq S_r\).

For almost every \(x,y\in\Omega_j\cap K_{(1+\lambda)r}^c\) we have
\[
e_r=\rho_0 ,
\qquad
\Psi=F ,
\qquad
|F(x)-F(y)|=|x-y| .
\]
Therefore
\[
\sum_{j=1}^2\iint_{\Omega_j\cap K_{(1+\lambda)r}^c}
\frac{e_r(x)e_r(y)}{|x-y|}|\Psi(x)\,dx\,dy-\Psi(y)|\,dx\,dy
=\tilde M_1^2+\tilde M_2^2 ,
\]
where \(\tilde M_j=\int_{\Omega_j\cap K_{(1+\lambda)r}^c}\rho_0\,dx\). Since
\[
\tilde M_1^2+\tilde M_2^2\ge\frac12(\tilde M_1+\tilde M_2)^2
=\frac12m_{(1+\lambda)r}^2 .
\]
the estimate~\eqref{eq:screened-averaged} gives
\[
\begin{aligned}
\frac1{16}m_{(1+\lambda)r}^2
&\le\frac18\bigl(\tilde M_1^2+\tilde M_2^2\bigr)\\
&\le\frac14 S_rm_r+2h\mathcal D_r
+\left(\alpha h+\frac{\beta}{\lambda^2h}\right)m_r .
\end{aligned}
\]
\end{proof}

\begin{corollary}\label{cor:exterior-mass}
There exists \(C>0\) such that, for every \(r>0\), \(0<\lambda\leq1/2\), and
\(h>0\),
\[
m_{(1+\lambda)r}\leq C\left[
\int_{K_{(1+\lambda)r}\setminus K_r}\rho_0\,dx+S_r+h+\lambda^{-2}h^{-1}
+\sqrt{h\mathcal D_r}\right].
\]
\end{corollary}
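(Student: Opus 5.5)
The plan is to deduce the corollary directly from Proposition~\ref{prop:screened-quadratic}. The only new input is the splitting
\[
m_r=\int_{K_{(1+\lambda)r}\setminus K_r}\rho_0\,dx+m_{(1+\lambda)r},
\]
which holds because \(\{\delta>r\}\) is the disjoint union of the shell \(K_{(1+\lambda)r}\setminus K_r\) and \(\{\delta>(1+\lambda)r\}\), up to the null set \(\{\delta=(1+\lambda)r\}\). Write \(s=\int_{K_{(1+\lambda)r}\setminus K_r}\rho_0\,dx\), \(m=m_{(1+\lambda)r}\), and
\[
A=\tfrac14 S_r+\alpha h+\beta\lambda^{-2}h^{-1}.
\]
With this notation the proposition reads
\[
\frac{m^2}{16}\le A(m+s)+2h\mathcal D_r .
\]

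Next we solve this quadratic inequality in \(m\). Using \(Am\le \tfrac{m^2}{32}+8A^2\), the term \(Am\) is absorbed into the left-hand side, which gives
\[
\frac{m^2}{32}\le 8A^2+As+2h\mathcal D_r .
\]
Taking square roots and using \(\sqrt{a+b+c}\le\sqrt a+\sqrt b+\sqrt c\) yields
\[
m\le C\bigl(A+\sqrt{As}+\sqrt{h\mathcal D_r}\bigr).
\]
Finally \(\sqrt{As}\le\tfrac12(A+s)\), so
\[
m\le C\bigl(s+A+\sqrt{h\mathcal D_r}\bigr).
\]
Expanding \(A\) gives exactly the terms \(S_r\), \(h\) and \(\lambda^{-2}h^{-1}\) with constants depending only on \(\alpha,\beta\), hence only on \(c_{\mathrm{TF}},c_{\mathrm D},c_{\mathrm W}\). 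This is the claimed bound.

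There is no real obstacle; the points to check are bookkeeping. All quantities must be finite so that the absorption is legitimate: \(m\le N<\infty\), \(\mathcal D_r\le \mathcal D(\rho_0,\rho_0)<\infty\), and \(S_r\le Z\) since \(\delta[\Phi_r]_+\le\delta V\le Z\). The constant must also stay uniform in \(r\), \(\lambda\) and \(h\), which holds because the only constants that enter are \(\alpha\), \(\beta\) and numerical ones.
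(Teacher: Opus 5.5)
Your proposal is correct and follows essentially the paper's route: you split \(m_r\) into the shell mass plus \(m_{(1+\lambda)r}\), insert this into Proposition~\ref{prop:screened-quadratic}, and solve the resulting quadratic inequality. The only difference is how the shell term is handled. The paper uses a case distinction: either the shell mass is at least \(\frac12 m_{(1+\lambda)r}\), or \(m_r<\frac32 m_{(1+\lambda)r}\). You instead absorb it directly through \(\sqrt{As}\le\frac12(A+s)\), which is an equally valid way to do the same bookkeeping.
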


\begin{proof}
Without loss of generality we may assume that
\[
\int_{K_{(1+\lambda)r}\setminus K_r}\rho_0\,dx<\frac12m_{(1+\lambda)r} .
\]
Consequently, we gain
\[
m_r=m_{(1+\lambda)r}+\int_{K_{(1+\lambda)r}\setminus K_r}\rho_0
\,dx<\frac32m_{(1+\lambda)r} .
\]

Then from Proposition~\ref{prop:screened-quadratic}, we obtain
\[
\begin{aligned}
\frac{1}{16}m_{(1+\lambda)r}^2
&\le\frac14 S_rm_r+2h\mathcal D_r
+\left(\alpha h+\frac{\beta}{\lambda^2h}\right)m_r\\
&\le C\bigl(S_r+h+\lambda^{-2}h^{-1}\bigr)m_{(1+\lambda)r}+Ch\mathcal D_r .
\end{aligned}
\]
Solving the quadratic inequality for \(m_{(1+\lambda)r}\), we obtain
\[
m_{(1+\lambda)r}\le C\left[
S_r+h+\lambda^{-2}h^{-1}+\sqrt{h\mathcal D_r}\right].
\]

This gives the asserted bound
\[
m_{(1+\lambda)r}\leq C\left[
\int_{K_{(1+\lambda)r}\setminus K_r}\rho_0\,dx+S_r+h+\lambda^{-2}h^{-1}
+\sqrt{h\mathcal D_r}\right].
\]
\end{proof}

\section{Proof of the Main Theorem}
\label{sec:main-proof}

In this section we prove Theorem~\ref{thm:main}.

Without loss of generality we assume \(N\geq Z\).
For the remainder of the proof, let
\[
\varrho=\rho_0-\rho^{\mathrm{TF}}.
\]
For \(Z\geq2\), Proposition~\ref{prop:tf-comparison} gives
\[
\mathcal{D}(\varrho,\varrho)\leq CZ^{25/11}, \qquad \|\varrho\|_{5/3}\leq
CZ^{7/5}.
\]

Let
\[
1\leq R\leq Z^{1/3}, \qquad r=RZ^{-1/3}, \qquad t=Z^{-29/77}.
\]
Then
\(0<r\leq1\) and \(t = R^{-1}Z^{-10/231} r.\)

For simplicity, we write \(q=kr\), where \(k=1\) or \(3/2\), and every
statement in \(q\) below holds for both radii. Since \(0<t<r/4\leq q/4\) for
all \(Z\) large enough, we apply Theorem~\ref{thm:screened-coulomb} to
\(\varrho\) and obtain
\begin{equation}\label{eq:BZ}
\begin{aligned}
\Bigl|\int_{K_q}\frac{\varrho(y)}{|x-y|}\,dy\Bigr|
&\leq C \left[ Z^{25/22}t^{-1/2} \bigl[1+\log(q/t)\bigr]^{1/2}
+ Z^{7/5}t^{1/5} \right] \\
&\leq CZ^{102/77}\bigl(1+\log R+\log Z\bigr)^{1/2},
\qquad x\in\mathbb R^3 .
\end{aligned}
\end{equation}
The second inequality uses \(q/t=kRZ^{10/231}\) and
\[
1+\log(q/t) = 1+\log k+\log R+\frac{10}{231}\log Z \leq 2(1+\log R+\log Z).
\]

Define
\[
Q_q = \int_{K_q}(\rho_0-\rho^{\mathrm{TF}}) \,dx=\int_{K_q}\varrho\,dx.
\]
Fix one constant \(C\) in~\eqref{eq:BZ} and set
\[
B_Z=CZ^{102/77} \bigl(1+\log R+\log Z\bigr)^{1/2}.
\]
Then Lemma~\ref{lem:exterior-barrier} gives
\[
|Q_q|\leq 2qB_Z, \qquad q=r,\frac32 r.
\]
Using \(\int_{\mathbb R^3}\rho^{\mathrm{TF}}\,dx=Z\), we obtain
\begin{equation}\label{eq:core-mass}
\int_{K_r}\rho_0 \,dx\leq Z+2rB_Z.
\end{equation}

From Corollary~\ref{cor:tf-tail}, we have
\begin{equation}\label{eq:shell-mass}
\int_{K_{3r/2}\setminus K_r}\rho_0
\,dx=\int_{K_{3r/2}\setminus K_r}\rho^{\mathrm{TF}}\,dx+Q_{3r/2}-Q_r
\leq Cr^{-3}+5rB_Z .
\end{equation}

Since
\[
\Phi_r=\Phi_r^{\mathrm{TF}}
-\int_{K_r}\frac{\varrho(y)}{|x-y|}\,dy ,
\]
Corollary~\ref{cor:tf-tail}, Lemma~\ref{lem:exterior-barrier},
and~\eqref{eq:BZ} give
\begin{equation}\label{eq:Sr}
S_r=\sup_{\delta(x)\geq r}\delta(x)[\Phi_r(x)]_+\leq Cr^{-3}+2rB_Z .
\end{equation}

We combine Proposition~\ref{prop:hartree-apriori} with
the estimate~\eqref{eq:rough-mass-detailed} and obtain
\[
\mathcal{D}_r\leq\mathcal{D}(\rho_0,\rho_0)\leq CZ^{7/3}.
\]
We apply Corollary~\ref{cor:exterior-mass} with \(\lambda=\frac12\) and
\(h=r^3\),
\[
m_{3r/2}\leq C\left[\int_{K_{3r/2}\setminus K_r}\rho_0\,dx+S_r+r^3+4r^{-3}
+\sqrt{r^3\mathcal D_r}\right].
\]
Since \(0<r\leq1\) and \(\mathcal{D}_r\leq CZ^{7/3}\), we have
\[
r^3+4r^{-3}\leq5r^{-3}, \qquad \sqrt{r^3\mathcal D_r}\leq Cr^{3/2}Z^{7/6}.
\]
Together with~\eqref{eq:shell-mass} and~\eqref{eq:Sr}, we obtain
\begin{equation}\label{eq:m32r}
m_{3r/2} \leq C \left( r^{-3} + rB_Z + r^{3/2}Z^{7/6} \right).
\end{equation}

Since \(m_r=m_{3r/2}+\int_{K_{3r/2}\setminus K_r}\rho_0\,dx\),
equations~\eqref{eq:shell-mass} and~\eqref{eq:m32r} give
\[
m_r\leq C \left( r^{-3} + rB_Z + r^{3/2}Z^{7/6} \right).
\]
Since \(N=\int_{K_r}\rho_0\,dx+m_r\), equation~\eqref{eq:core-mass} then yields
\begin{equation}\label{eq:mr}
N \leq Z + C \left( r^{-3} + rB_Z + r^{3/2}Z^{7/6} \right).
\end{equation}
Dividing~\eqref{eq:mr} by \(Z\) and recalling \(r=RZ^{-1/3}\), we obtain
\begin{equation}\label{eq:pre-optimization}
\frac NZ \leq 1 + CR^{-3} + CRZ^{-2/231} \bigl(1+\log R+\log Z\bigr)^{1/2} +
CR^{3/2}Z^{-1/3}.
\end{equation}

We choose \(R=Z^{1/462}(1+\log Z)^{-1/8}\) to balance the first two terms
in~\eqref{eq:pre-optimization}. Then we gain
\[
\begin{aligned}
\frac NZ
&\leq1+CZ^{-1/154}(1+\log Z)^{3/8}\\
&\qquad+CZ^{-1/154}(1+\log Z)^{-1/8}\bigl(1+\log R+\log Z\bigr)^{1/2}\\
&\qquad+CZ^{-305/924}(1+\log Z)^{-3/16}\\
&\leq1+CZ^{-1/154}(1+\log Z)^{3/8}.
\end{aligned}
\]
Equivalently, \(N\leq Z+CZ^{153/154}(1+\log Z)^{3/8}\).
This completes the proof of Theorem~\ref{thm:main}.

\bibliographystyle{amsplain}
\bibliography{references}

\end{document}